\newcommand{\note}[2]{{\color{red}[[\textbf{#1:}#2]]}}
\DeclareMathOperator*{\argmin}{arg\,min}
\newtheorem{theorem}{Theorem}
\newtheorem{lemma}[theorem]{Lemma}
\newtheorem{definition}[theorem]{Definition}
\journal{Computers and Security}
\begin{document}

\begin{frontmatter}

\title{Attack Graph Obfuscation}

\author{Hadar Polad}
\ead{poladh@post.bgu.ac.il}
\author{Rami Puzis}
\ead{puzis@bgu.ac.il}
\author{Bracha Shapira}
\ead{bshapira@bgu.ac.il}
\address{Software and Information Systems Enginnering, \\Ben-Gurion University of the Negev}

\begin{abstract}
Before executing an attack, adversaries usually explore the victim's network in an attempt to infer the network topology and identify vulnerabilities in the victim's servers and personal computers. 
Falsifying the information collected by the adversary post penetration may significantly slower lateral movement and increase the amount of noise  generated within the victim's network. 
We investigate the effect of fake vulnerabilities within a real enterprise network on the attacker performance. 
We use the attack graphs to model the path of an attacker making its way towards a target in a given network. 
We use combinatorial optimization in order to find the optimal assignments of fake vulnerabilities.
We demonstrate the feasibility of our deception-based defense by presenting results of experiments with a large scale real network. We show  that adding fake vulnerabilities forces the adversary to invest a significant amount of effort, in terms of time and exploitability cost.
\end{abstract}

\begin{keyword}
attack graphs\sep 
moving target defense\sep
heuristic search\sep
attack cost maximization
\end{keyword}

\end{frontmatter}


\section{Introduction}
\label{sec:intro}

Protecting a network is always a difficult task because attackers constantly explore new ways to penetrate security systems by exploiting their vulnerabilities. These vulnerabilities often go unpatched, due to  lack of resources, negligence, or a variety of other reasons.
\\Although network professionals have offered various versions of the attack process over the years, today the general anatomy of the attack process is thought to be comprised of five steps \cite{murphy2010application}:
\begin{enumerate}
\item Reconnaissance
\item Scanning
\item Gaining access
\item Maintaining access
\item Covering tracks
\end{enumerate}
Some networking professionals estimate that an adversary routinely spends up to 95\% of its time planning an attack, while only spending the remaining 5\% on execution  \cite{kewley2001dynamic}. 
During the reconnaissance step, the attacker attempts to gather as much information about the designated network as possible, including network topology, operating systems and applications, and unpatched vulnerabilities.
While doing so, the adversary generates traffic on the network, making itself more vulnerable  for detection\cite{huber2011host}. 
\\In this research, we try to sabotage the reconnaissance and scanning steps of the attack process by obfuscating the information acquired by the adversary. While making the attacker repeat steps 1 and 2 repeatedly, after failing to achieve step 3.
\par It is well known, that attackers rely upon the ability to accurately identify the operating system and services running on the network in order to plan and execute successful attacks \cite{murphy2010application}.
\\The desire to mislead a possible attacker underlies this research aiming to explore the possibilities of obfuscating the information acquired by an adversary. This has been achieved by adding fake vulnerabilities that distract the attacker and contribute to the erroneous construction of an attack path. Misleading the attacker with false information can set the attacker on a path that will deplete its resources, increase the likelihood of detection due to the increased activity, and keep the attacker away from essential targets.
We hypothesize that adding fake vulnerabilities will cause the attackers to perform additional activities while attempting to achieve their goals.
\\In this study we assume that the attacker will choose the path with the lowest total cost of the resulting attack graph. In addition, the attack graph is constructed from the information known to the adversary. 
Furthermore, we assume that the attacker knows the structure of the given network, and the vulnerabilities in each host.
\\In this research, we make the following contributions:
\begin{itemize}

\item We present a new defense strategy for protecting enterprise networks. This method utilizes attack graphs for modeling all possible attack plans in a given network. Fake vulnerabilities are then added to hosts in the network in order to make it harder for an adversary to reach its goal in the target network.
\item In contrast to  other studies that use small synthetic networks, this study examines the impact of our algorithm on a real enterprise network, and demonstrate the solution of the above challenge for a real organization.
\item We gathered a collection of guidelines for fake vulnerabilities placement.\\
This study considered the fact that when a layer of deception is added to a host in a network, it inhibits the network's routine activity.
Therefore, the user, i.e., the enterprise aiming to protect its network, should decide about the desired level of security it wishes to apply and the  resources it can and will provide for the task of protecting its network.\\
Another consideration pertains to the fake information provided to attackers. If the fake information is naive or poorly chosen, the attacker may immediately become suspicious and assume that the responses obtained are deceptive \cite{rowe2007defending}.
In our research, we add the deceptive information carefully and sensibly, in such a way that it cannot be easily detected by an attacker. While applying deceptive information to a specific host, we assure that it is consistent with the environment and with other information that can be concluded.\\
Further more, we provide an efficient and effective  algorithm for assignment of fake vulnerabilities in the network.

\item We formulate our problem as an AI challenge. We are looking for a group of assignments which maximizes the adversary's attack total cost, while it tries to find the path with minimal cost. Actually, the attacker is trying to solve a planning (AI) problem  while our method searches for a hard instance of this planning problem. 
In a general saying, we formulate a search problem whose goal is to find a hard instance for an AI algorithm. 

\item We present an admissible but still feasible heuristic solution for the defined problem. The admissibility property assures the optimal solution to our problem. 

\end{itemize}


\section{Background on attack graphs} 
\label{sec:ag}
Attack graphs are data structures used to model the possible paths an attacker could use to achieve its goal within a specific target network.
The earliest attack graphs \cite{sheyner2002automated,sheyner2004scenario} were constructed manually by Red Teams and could not scale to large networks.  
Later, technological advancement and the introduction of logical attack graphs~\cite{ou2006scalable} made attack graph generation more scalable and comprehensible to the human user. 
In general, attack graph generation requires the complete network connectivity map and the list of existing vulnerabilities in the network hosts~\cite{khaitan2011finding}.

The task of collecting the necessary data about the vulnerabilities and the network structure  must be automated.  
The former can be collected using either one of the existing vulnerability scanners, for example Nessus \cite{nessus} or openVAS \cite{OpenVAS}. 
Various tools, such as NMAP~\cite{lyon2009nmap}, can aid in the network topology assessment. 
However, accurate assessment of connectivity within large organizations is still an open problem due to firewall rules, intrusion detection systems (IDS), lack of documented switch and router configurations, etc.

\subsection{Attacker model}
In this study we assume that the adversary's goal is to obtain designated privileges in the victim's network, while minimizing attack cost. 
The adversary also tries to minimize its footprint in the attacked network and to avoid redundant / superfluous actions that create additional noise.  
The more noise the adversary generates, the higher are the chances  that the malicious activity will be detected by an IDS. 
Reducing the consumption of resources is an objective shared by both defenders and attackers alike.

This led us to conclude that the best approach for modeling the attacker would follow the following assumptions:
\begin{itemize}
\item The attacker considers, on every step, the resources it will need to spend, during the attack. 
This describes the cost of each vertex in the attack graph.
\item The attacker aims to choose the path to its goal in the targeted network with the lowest cost.
\end{itemize}
We will construct the attack graph considering the above, while using the information the attacker can get. In this research we assume the attacker has all the information about the network - topology and vulnerabilities at each host.

\subsection{Logical attack graphs} 
\label{sec:lag}
We adopted the definition of logical attack graphs proposed by Ou et al.~\cite{ou2006scalable}. 
\begin{definition}[Logical Attack Graph]
\label{def:ag}
A logical attack graph is a five-tupple $G=(Np, Ne, Nc, E, g)$, where $Np$, $Ne$, and $Nc$ are disjoint sets of privilege nodes, exploit nodes, and configuration nodes respectively. 
$E$ is the set of directed graph edges 
$E \subseteq \left(Np \times Ne\right) \cup \left(Ne \times \left(Np \cup Nc\right) \right)$. 
$g \in Np$ is the attacker's goal.
\end{definition}

Privilege nodes represent the various assets and privileges that can be acquired on the host machines (by using an exploit). 
Configuration nodes mainly represent vulnerabilities but can also include various configurations such as the operating system and services installed on the host.

Consider for example the attack graph depicted in Figure~\ref{fig:exampleLogical}. 
Consider the exploit node $e1$.
It can only be activated if configuration $c1$ is in place and the privileges $p2$ and $p3$ were acquired by the attacker. 
Once activated, $e1$ grants the privilege $p1$.  
Alternatively, privilege $p1$ can be acquired using the exploit $e2$. 

\begin{figure}
  \centering
  \includegraphics[width=.6\linewidth]{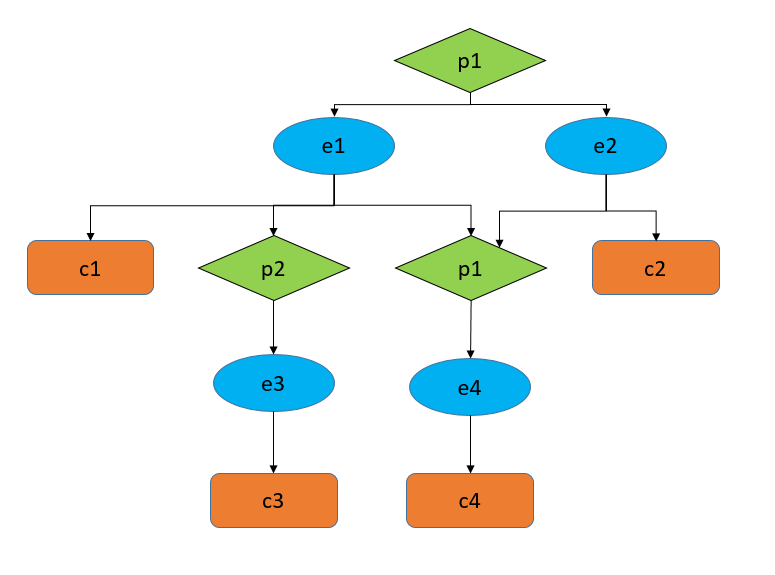}
  \caption{Example of logical attack graph. Green diamonds are privilege nodes, Blue ovals are exploit nodes, Orange rectangles are configuration nodes. Arrows point to the logical requirement of each node.}
  \label{fig:exampleLogical}
\end{figure}

\subsection{Attack graph generation tools:}
\label{sec:ag-gen}
There are multiple tools for generating logical attack graphs. 
The most notable are Network Security Planning Architecture (NetSPA)~\cite{artz2002netspa,ingols2006practical}, Graphical Attack Graph and Reachability Network Evaluation Tool (GARNET)~ \cite{williams2008garnet}, and multi-host, multi-stage Vulnerability Analysis Language (MulVAL)~ and framework \cite{ou2005mulval} which enables automatic and relatively easy generation of attack graphs.

\label{par:mulval}
We use MulVAL for generating the attack graphs. 
MulVAL is a framework for modeling the interaction of the attacker with vulnerabilities and network configurations. 
The inputs to MulVAL are:
\begin{itemize}
\item  Known exploits and their effects.
\item  Hosts configurations - including installed software.
\item  Network configurations - including VLANs and firewall tables.
\item  Principals - including users and their permissions.
\item  Interaction rules for the different parts of the system.
\item  Policy - what accesses does the system wants to permit.
\end{itemize}

Most of the configurations required for running MulVAL can be imported from automatic tools or public sources. 
For example, known exploits could be imported from NVD (National Vulnerability Database)\cite{NVD}. 
Host configurations can be imported from automatic network scanners such as Nessus \cite{nessus}. 
Once the configurations are present, they are converted to facts, and rules - the basic building blocks of MulVAL's reasoning system. 
From this point, the facts and rules can be directly loaded into a Prolog environment and executed. 
MulVAL uses XSB Prolog dialect~\cite{sagonas1994xsb} because it averts from re-computation of formerly calculated facts. 
This results in a polynomial run time complexity for the attack simulation phase.

Empirical tests show that in practice, MulVAL is very efficient in computing attack simulation, even in scenarios including thousands of hosts. 
In a typical MulVAL run, it will check if there is a trace that will result in a policy violation using the facts and rules defined earlier. 
The result of such run is an attack graph, containing all the possible paths from the initial configuration to the goal privileges.
MulVAL is a popular tool in recent studies related to attack graphs. 
Its advantages are noticeable, and therefore we use it in the current study.

\subsection{Exploitation cost} 
\label{sec:cve}
Common vulnerabilities have been compiled and listed in a system operated by the MITRE Corporation and the U.S. National Vulnerability Database~\cite{NVD}. 
Each vulnerability is tagged with a unique CVE (Common Vulnerabilities and Exposures) identifier and has a CVSS (Common Vulnerabilities Scoring System) score, which is an open industry standard for assessing the severity of computer system security vulnerabilities.
CVSS scores are based on two subscores:
\begin{itemize}
\item Impact Subscore - reflects the direct consequence of a successful exploit and represents the consequence to the impacted component.
\item Exploitability Subscore - reflects the ease and technical means by which the vulnerability can be exploited.
\end{itemize}
In this study we use the exploitability subscore as a cost weight on configuration nodes that represent vulnerabilities.  
\begin{definition}[Vulnerability exploit cost]
\label{def:costv}
Let $v\in N_c$ be a vulnerability node, then $C_v\in [0..1]$ denotes the normalized cost of exploiting $v$. 
\end{definition}
The normalized cost of vulnerability exploit can be obtained by dividing the exploitability subscore by 10 (CVSS v2.0) or by 3.9 (CVSS v3.0).
Future enhancements using alternative cost functions are discussed in Section~\ref{sec: Conclusions and future work}.

\section{Obfuscating the Attack Graph} 
\label{sec:method}
In this section we describe the idea of obfuscating an attack graph by adding fake vulnerabilties and approaches for solving it.  
First we formulate the attack graph obfuscation as a search problem and discuss the random assignment of fake vulnerabilties~\cite{polad2017attack} as a baseline approach. 
Then we employ Depth First Brand and Bound algorithm to find the optimal solution. 
In order to speed up the search process we define a few admissible heuristics and a node ordering function to guide the search. 

\subsection{Problem definition}
\label{sec:problem}

Assume an attack graph $G$ according to Definition~\ref{def:ag}. 
All the following notations are defined with respect to an attack graph $G$ unless specified otherwise. 
An attack plan consists of series of exploits along with the relevant vulnerabilities and obtained privileges that would lead the attack to obtain the goal privilege $g\in N_p$.
In the rest of this paper we will use the terms plan and attack path interchangeably referring to the following definition: 
\begin{definition}[Attack path]
\label{def:attack-path}
Attack path $P\subseteq N_p\cup N_e\cup N_c$ is a collection of nodes such that 
\begin{itemize}
\item $g\in P$ -- the goal node is in $P$.
\item $\forall_{p\in P \cap N_p} \exists_{e \in P\cap N_e}, (p,e)\in E$ -- every privilege in $P$ is obtained by executing an exploit.
\item $\forall_{e\in P\cap N_e}\forall_{x : (e,x)\in E}, x\in P$ -- $P$ includes all prerequisites for all its exploits.
\end{itemize}
\end{definition}

In order to execute the attack plan, an adversary should invest effort in executing all the relevant exploits. 
We define the cost of an attack plan as the sum of the normalized costs of exploiting all the relevant vulnerabilities.    
\begin{definition}[Attack cost]
\label{def:cost-path}
The total cost of executing the attack path $P$ is defined as the sum of the exploit costs of all the vulnerabilities included in the path $C_P = \sum_{v\in P\cap N_c} C_v$.
\end{definition}\noindent
We discuss alternative definitions of attack cost in Section~\ref{sec: Conclusions and future work}.

According to Definition~\ref{def:attack-path}, an attack path $P$ may include exploits and vulnerabilities that are not necessary to obtain the goal privilege. 
Assuming that the attacker will avoid unnecessary actions within the target network we define an optimal attack plan as follows.
\begin{definition}\label{def:opt-path}
An optimal attack path in an attack graph $G$ is an attack path having the minimum attack cost:  
$P^*=\argmin_P \{C_P\}$
\end{definition}\noindent
Note that, there might be multiple optimal attack paths in a given attack graph. 
We assume that the attack has the data, knowledge, and capabilities to find the optimal attack plan. 

Next we introduce fake vulnerabilities into the attack graph observed by the attacker in order to manipulate his/her planning decisions. 
Let vulnerability $v\in N_c$ be fake, for example if it is simulated by the Deception Toolkit~\cite{DTK}. The binary function $fake:N_c\rightarrow\{0,1\}$ will be used to discriminate fake configuration nodes from real configuration nodes.    
Any exploit $e\in N_e$ that depends on the fake vulnerability $v$, i.e. $(e,v)\in E$, will fail.
Note that this approach for modeling fake vulnerabilities is different from honeypots used to deceive the attacker~\cite{albanese2016deceiving}, where an exploit will typically succeed but will not provide the attacker with useful privileges.
Please refer to Section~\ref{sec:honeypots} for pros and cons of honeypots based deception. 

In this study we assume that the attacker cannot identify fake vulnerabilities before attempting to exploit them. 
An attacker trying to exploit a fake vulnerability $v$ and failing to do so, will realize that this vulnerability is fake and will cease executing the  attack plan in order to avoid unnecessary expenses.  
Given the new observation ($fake(v)=1$) the attacker may discard previous attack plans that rely on fake vulnerabilities and re-plan the attack to find alternative paths.
We assume that all privileges obtained by the attacker due to partial execution of the attack plan remain at his possession and that the attacker will try optimizing the remainder of the attack plan. 
We assume that the attacker will continue doing so until the attack goal is reached.  
Please refer to Section~\ref{sec: Conclusions and future work} for a discussion on alternative attacker models that can be used in this framework. 
We will use the term real attack path to denote a successful attack path that does not rely on fake vulnerabilities to reach the goal: 
\begin{definition}[Real attack path]
\label{def:attack-path}
Let $P\subseteq N_p\cup N_e\cup N_c$ be an attack path according to Definition~\ref{def:attack-path}. $P$ is \emph{real attack path} if and only if $P^`=P-\{c\in N_c : fake(c)=1\}$ is an attack path as well. 
\end{definition}

In this paper we optimize the assignments of fake vulnerabilities such that the overall cost of constructing a real attack path is maximized.  
\begin{definition}[Attack Graph Obfuscation Problem]
Given an attack graph $G=(N_p, N_e,N_c, E, g)$ and a number $k$ of fake vulnerabilities to install find an assignment of fake vulnerabilities $fake:N_c\rightarrow \{0,1\}$ such that:
$\sum_{c\in N_c} fake(c) = k$ and 

\end{definition}

\begin{definition}{PTC(AG)}\label{def:PTC}

PTC(AG) is the minimal, perceived by the attacker,  total cost of the optimal path in attack graph AG (OPT(AG)). Actually, this is the total cost, the attacker estimates it needs to pay, under the assumption that all the appeared vulnerabilities in the attack graph are valid, and not fake.\\ 
In other words, $PTC(AG) = C_{s,t}(OPT(AG))$ where s and t are the first and last nodes in OPT(AG) respectively.\\
In practice, we compute PTC(AG) as the cost of the output of a planner which produces an optimal attack plan, given attack graph AG. 
\end{definition}

\begin{definition}{$AA(AG,\left\{a_1,...,a_n\right\}$}\label{def:AA}
The output is an attack graph AG', constructed from assigning the group of assignments: $\left\{a_1,...,a_n\right\}$.
\end{definition}
\begin{definition}{APTC(AG)}\label{def:APTC}
The actual cost the attacker will pay when trying to reach its target in attack graph AG.
Calculating APTC(AG) is as follows (also described in Algorithm \ref{algo: APTC}):
\newline\noindent\textbf{step 1: } We initialize the aptc variable (that accumulates the cost) to zero. 
\newline\noindent\textbf{step 2: } If the optimal attack path in AG consists of vertexes that were created due to fake vulnerability assignments (fake vertexes), we add to aptc the cost from the initial vertex to the first fake vertex in the attack plan.
\newline\noindent\textbf{step 2.1} Now we modify the attack graph AG as follows:
\begin{enumerate}
\item We remove the fake assignment, which removes all the vertexes created due to that.
\item We set to zero the weights of the vertexes the attacker visited on the path to the first fake vertex. 
\end{enumerate}
Then, we iterate to step 2 with the new attack graph AG.\\
The algorithm stops when the optimal attack path from the source to the target has no more fake vulnerabilities. The algorithm returns the accumulated aptc. 

\begin{algorithm}[ht]
\caption{APTC(AG)}
\label{algo: APTC}
\begin{algorithmic}[1]
\renewcommand{\algorithmicrequire}{\textbf{Input:}}
\renewcommand{\algorithmicensure}{\textbf{Output:}}
\Require AG
\Ensure APTC(AG)
\State aptc $\gets$ 0
\State $i \gets 1$
\State Let $Assignments = ((IP_1,v_1), (IP_2, v_2) ... (IP_m,v_n))  \Leftarrow  $ all the fake assignments of fake vulnerabilities to IPs.
\State AG' $\gets$ AG
\State $p_i=(u_1, u_2, ... u_m) \gets OPT(AG')$
\State flag = True
\While{flag}
\State flag = False
\If{$ \exists k$, such that $u_k$ is a vertex created from some fake assignment $a \in Assignments$}
\State Let $u_k$ be the first node which maintains the above.
\State flag = True
\State aptc = aptc + $\sum_{j=1}^{k} C(u_j)$
\For {vertex $\in$ OPT(AG')}
\If {vertex != $u_k$}
\State cost($u_k$) = 0
\Else
\State break
\EndIf
\EndFor
\State AG' $\gets$ removeNodeFromAttackGraph(AG',a)
\State $p_i=(u_1, u_2, ... u_m) \gets OPT(AG')$
\EndIf
\State $i \in i+1$
\EndWhile
\end{algorithmic}
\end{algorithm}
\end{definition}
\begin{lemma}\label{lemma:PTC(AG)=APTC(AG)}
If attack graph AG has no fake vulnerabilities, then \\PTC(AG) = APTC(AG).
\end{lemma}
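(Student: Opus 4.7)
The plan is to unwind both definitions and observe that, in the absence of fake vulnerabilities, the attacker's first-round optimal plan succeeds verbatim, so the perceived cost and the actual cost must coincide.

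First, I would invoke Definition~\ref{def:PTC} directly: $PTC(AG)$ is by definition the cost of $OPT(AG)$, the optimal attack path in $AG$. Nothing more is needed on this side. The substantive work is on the $APTC$ side.

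Next, I would trace through Algorithm~\ref{algo: APTC} under the hypothesis that $AG$ contains no fake vulnerabilities. The list $Assignments$ constructed on line~3 is empty, hence no node of $OPT(AG') = OPT(AG)$ can be a ``vertex created from some fake assignment $a \in Assignments$''. Therefore the if-condition on line~9 is false on the very first iteration, \emph{flag} stays \emph{False}, and the while loop terminates after one pass. By the textual description that accompanies the algorithm -- ``the algorithm stops when the optimal attack path from the source to the target has no more fake vulnerabilities'' -- and by the attacker model described in Section~\ref{sec:method} (the attacker carries out the optimal plan and pays for each exploit it actually executes), the attacker proceeds along $OPT(AG)$ from source to goal without ever aborting, and therefore pays exactly $C_{OPT(AG)}$. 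Combining the two sides yields $APTC(AG) = C_{OPT(AG)} = PTC(AG)$.

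The main obstacle, and the only subtle point, is a small notational gap in Algorithm~\ref{algo: APTC}: as literally written, \emph{aptc} is initialized to $0$ on line~1 and is only incremented inside the if-branch guarding fake vertices, so a naive reading returns $0$ in the base case rather than $C_{OPT(AG)}$. I would handle this in the proof by appealing to Definition~\ref{def:APTC}'s prose as authoritative and inserting a short clarifying remark that upon termination the accumulated \emph{aptc} represents the sum of the costs already paid on prefixes leading into failed fake exploits, and that the cost of the final, successfully executed real path must be added to obtain the attacker's total expenditure. With that clarification the lemma collapses to the observation that when no fake vulnerabilities exist, there are no ``prefix'' contributions and the only executed path is $OPT(AG)$, paid in full.
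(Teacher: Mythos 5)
Your proof is correct and follows essentially the same route as the paper's: unwind Definition~\ref{def:APTC}, observe that with no fake vulnerabilities the fake-vertex branch of Algorithm~\ref{algo: APTC} never fires, and conclude that the actual cost reduces to the cost of $OPT(AG)$, which is $PTC(AG)$ by Definition~\ref{def:PTC}. You are additionally more careful than the paper itself: you correctly note that the pseudocode as literally written would return $aptc=0$ in this case because the cost of the final successful plan is never accumulated, a defect the paper's proof silently glosses over by asserting the intended semantics (``APTC(AG) is the total cost of the attack plan produced to AG'') rather than what the algorithm actually computes.
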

\begin{proof}
If attack graph AG has no fake vulnerabilities, there are no re-calculation of attack plans. 
Due to the fact that the attack plan produced from attack graph AG, it does not include vertexes that were added due to fake vulnerabilities addition. From definition \ref{def:APTC}, steps 1 and 2 does not accrue at all. So actually APTC(AG) is the total cost of the attack plan produced to AG. In other words -  APTC(AG) = PTC(AG).
\end{proof}

\subsubsection{Random based approach} \label{subsub: Random based approach}
For the random baseline approach we obfuscate the attack graph by choosing the deceptive IPs and the vulnerabilities randomly. \\
In order to achieve the condition state in equation (1) above, we generate the following algorithm (see also algorithm \ref{algo: create AG'} and Figure \ref{fig:CreateG'}):\\
\begin{figure}
\centering
\includegraphics[width=0.7\linewidth]{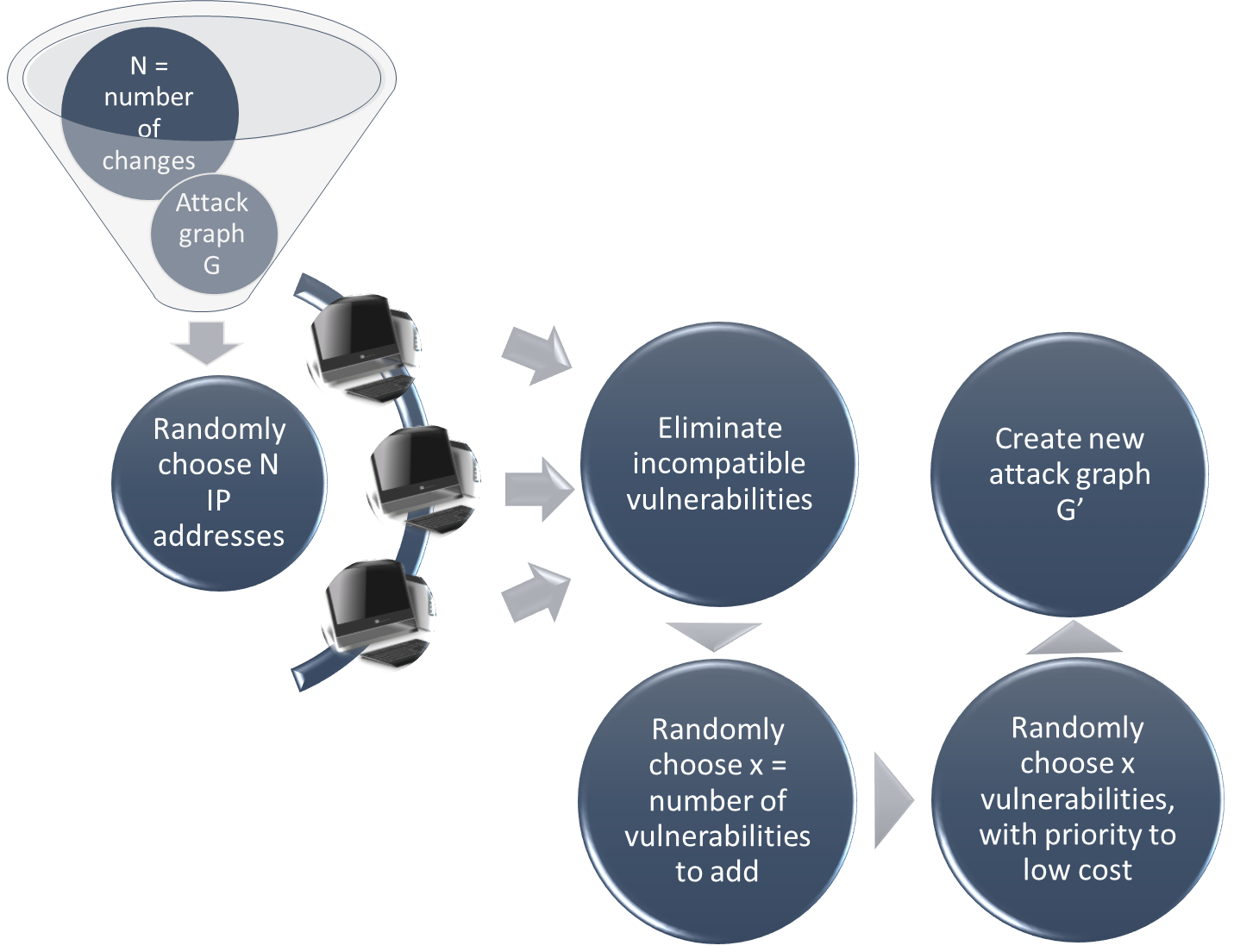}
\caption{Random approach: create AG'}
\label{fig:CreateG'}
\end{figure}
\begin{algorithm}[ht] 
\caption{Create AG'}
\label{algo: create AG'}
\begin{algorithmic}[1]
\renewcommand{\algorithmicrequire}{\textbf{Input:}}
\renewcommand{\algorithmicensure}{\textbf{Output:}}
\Require AttackGraph AG, numofChanges
\Ensure AG'
\State AG' = AG
\State $pickedIps \leftarrow$ list of random chosen IPs from $AttackGraph$
\For {each IP $ip\in pickedIps$}
\State $currentVulList \leftarrow$ list of all vulnerabilities exist in $ip$
\State $OSVersion \leftarrow$ getOSVersion$(RV_u)$
\State $vulList \leftarrow$ list of all vulnerabilities associate with $OSVersion$ 
\State $validOSList \leftarrow$ $vulList \setminus currentVulList$
\State $numOfVulToAdd \leftarrow Random(0,validOSList\leftarrow Size())$
\State $chosenVul \leftarrow $ \\$chooseRandomList (validOSList,numOfVulToAdd)$
\State $AddNodes(AG', IP, chosenVul)$
\EndFor
\Return AG'
\end{algorithmic}
\end{algorithm}

Given an attack graph AG and N number of desired changes, \\
\textbf{step 1:} We randomly choose x \big[=(Number of IPs in original network) $\times$ (Portion of desired changes in the graph)\big] IPs that will be designated as deceptive hosts. \\
 \textbf{step 2:} We eliminate incompatible vulnerabilities, by filtering vulnerabilities that create conflicts with other information known about the given IP address. This is done by filtering vulnerabilities that do not match the operating system that exists on the targeted computer. The information gathered about the IP address was collected earlier by the Nessus scanner \cite{nessus}.
\\ \textbf{step 3:} We randomly choose how many vulnerabilities to add to each IP address. This is done because we want to be as unpredictable as possible, in order to be invisible and avoid detection by the attacker.\\
In \textbf{step 4:} we choose 
x vulnerabilities for each IP address chosen as deceptive. These vulnerabilities are randomly selected, with preference to low-cost vulnerabilities. This is done based on the assumption that adding the same vulnerabilities can lead to discovery of the defender's deception by the attacker.\\
\textbf{step 5:} We create new nodes in the attack graph. For every chosen IP address, we add the new chosen vulnerabilities to the Nessus file. Then, we regenerate the attack graph with MulVAL, using the same connectivity as in the original graph.
As a result, new paths to the target are created, some of them consists  only of fake vulnerabilities, and some are combined from fake and real vulnerabilities.

\subsubsection{AI search based approach}\label{subsub: AI search based approach}
In this approach we obfuscate the  attack graph AG into attack graph AG', by using  heuristics.\\
Given an attack graph AG and a number of changes K, we need to find the best K assignments of fake vulnerabilities in the given network. The best K assignments are the K assignments that will maximize the adversary's total attack cost. Thus, we need to find the AG' ,that is generated by  K fake vulnerabilities assignments to AG, which maximize APTC(AG').\\
We model the problem as a search problem and apply the heuristics described in \cite{puzis2007finding}, adapting them into our problem, and add new heuristic and elements ordering.

First, we model the problem of vulnerabilities assignments into a search problem, as  we search  the group of best k assignments, by adapting the search space described in \cite{puzis2007finding}:
\paragraph{Search space} \label{search space}\mbox{}\\
The search space is constructed as a decision tree.
In the following discussions we will use the terms nodes and transitions to refer the graph elements of the \textit{search space} and the terms vertex and edges to refer to the graph elements of the \textit{input attack graph}.
Every node (denoted by V) maintains two collections:
\begin{enumerate}
\item Assignments (denoted by A(V)) - group of vulnerability assignments to an IP address. for example: \\ $<Vulnerability_{1}, IP_{1}>, <Vulnerability_{2}, IP_{2}>,$ ... $<Vulnerability_{l}, IP_{l}>$
\item Candidate assignments
(denoted by AC(V)) - an ordered list of possible vulnerability assignments to an IP address. for example: \\ 
$<Vulnerability_{1}, IP_{1}>, <Vulnerability_{2}, IP_{2}>,$ ... $<Vulnerability_{l}, IP_{m}>$
\end{enumerate}
The order of assignments candidate in AC(V) may vary between different assignments in A(V) group. 
The methods of ordering the assignments in AC(V) will be described in Section \ref{par: Element ordering} we will refer to the first assignment in the ordered list AC(V) as ${ac}^{best}$.
We will refer to the search space as a decision tree. Thus, at every node we decide whether ${ac}^{best}$ is included in the final k assignments.
According to
this decision we branch to either of the two sub-trees, one containing all the groups that include ${ac}^{best}$ and the other containing all the groups that do not include it. Thus we assure that all possible solutions can be found during the search.\\
Every node has two children: 
\begin{itemize}
\item Left child (denoted by $V^-$).
\begin{itemize}
\item Has assignments group as $V$:\\ 
A($V^-$) = A(V)
\item Has candidate list which contain all
the assignments from the candidates list of $V$
excluding ${ac}^{best}$:\\
AC($V^-$) = AC(V) $\setminus$ $\left\{{ac}^{best}\right\}$
\end{itemize}

\item Right child (denoted by $V^+$).
\begin{itemize}
\item Has all the assignments of $V$,
additionally including ${ac}^{best}$:\\ 
A($V^+$) = A(V) $\cup$ ${ac}^{best}$
\item Has candidates list which contain all
the assignments from the candidates list of $V$
excluding ${ac}^{best}$:\\
AC($V^+$) = AC(V) $\setminus$ $\left\{{ac}^{best}\right\}$
\end{itemize}
\end{itemize}

The root node of the tree represents an empty group of assignments. Its candidates list contains all the possible assignments to the graph, under the assumption that if the cost of vulnerability x and vulnerability y equals and they have the same pre and post conditions, assignment of vulnerability x to a specific IP is the same as assigning y to the same IP.
Every child of the root node holds a candidate list that is missing one assignment and a group that includes one assignment or none.
All nodes at the second level hold a candidates list that is missing two assignments and a group that includes zero, one, or two assignments, etc.\\
Since we are looking for a group of size K, we will refer to nodes with $\left|A(V)\right|=K$ as possible solutions to our problem. We exclude from the search space all sub-trees that do not contain a possible solution. These can be either sub-trees rooted at nodes with $\left|A(V)\right|>K$ or sub-trees rooted at nodes with $\left|AC(V)\right| + \left|A(V)\right| <K$. \\
\begin{figure}[ht]
\centering
\centering
\includegraphics[width=.7\linewidth]{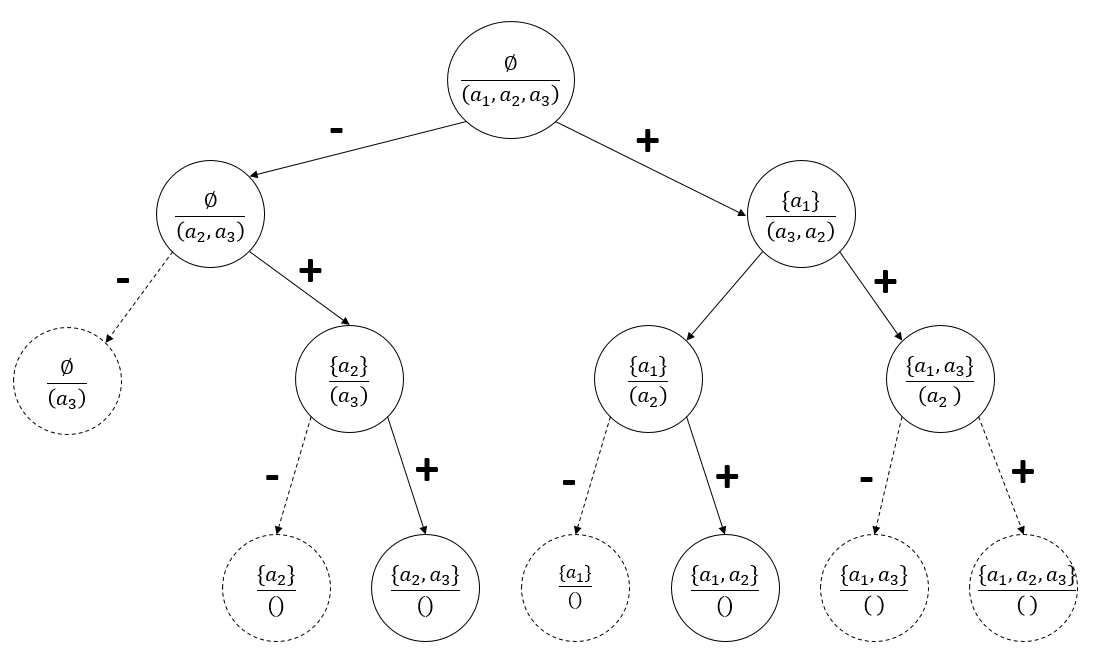}
\caption{Example of a tree representing the search space}
\label{fig:searchSpace}
\end{figure}
Figure \ref{fig:searchSpace} presents an example of a tree representing the search space of the vulnerabilities assignment problem with K=2 and l=3. Dashed lines represent sub-trees that do not contain a possible solution, and therefore will be pruned. The set in curly braces $\left\{...\right\}$ represents A(V) and the list in parenthesis (...) represents AC(V). 
\par The search should start at the root node R whose assignment group is empty and the candidates list is $(a_1, a_2, a_3)$ and ${ac}^{best}$ = $a_1$. We consider all groups where $a_1$ is a member by moving to the right child (following the "+" transition).\\
$R^+$ maintains the group $\left\{a_1\right\}$ and a candidates list that includes all possible assignments, excluding the assignment $a_1$. It is possible that the order of the assignments in $AC(R^+)$ is different from the order of assignments in $AC(R)$ (See section \ref{par: Element ordering}). We consider all groups where $a_1$ is not a member by following the transition "-" from the root node. $R^-$ maintains an empty group and candidates list that is identical to the candidates list of $R^+$.

\paragraph{Heuristics Search Algorithm}\mbox{}\\
The input to this algorithm is:
\begin{enumerate}
\item AG - the attack graph produced from the given network.
\item P - attack planner.
\end{enumerate} 
We investigated the effect of two search algorithms on this problem - DF-BnB and A*.
\par \underline{\textbf{A*}}:\\ 
We use the A* algorithm to search for a quicker solution to our problem. A* is a best-first search algorithm, meaning that it solves problems by searching among all possible paths to the solution while considering first the paths that appear to lead most quickly to the solution. \\\\

\par \underline{\textbf{DF-BnB}}:\\ 
We use the Depth First Branch and Bound (DF-BnB) algorithm \cite{zhang1995performance} to search the tree for the group of assignments which produce an attack plan with maximal attacker's cost.\\ 
DFBnB is known to be effective when the depth of the search tree is known, as in our case. DFBnB is similar to DFS but it prunes nodes according to a global bound. Under the assumption that the given planner (P) finds the optimal solution, we begin the search with a bound equal to PTC($AG$). Which is the cost of the optimal path in attack graph AG. During the search, the bound is equal to the maximal utility found so far.\\\\
\par In contrast to the traditional search that is aimed at finding the node with minimal cost, we are using a utility based approach. We define the utility of the node V (denoted by U(V)) as APTC(AA(AG,A(V))).\\
A heuristic function h(V) estimates the maximal utility that can be gained by exploring the sub-tree rooted at V. f(V) = g(V)+h(V) is a function that estimates the maximal utility of nodes in V's sub-tree.\\
The utility of the root node R is equal to $APTC(AG,\left\{\right\})=PTC(AG)$ (see lemma \ref{lemma:PTC(AG)=APTC(AG)}), because A(R) = $\emptyset$ and adding zero assignments to the attack graph produce an attack plan with optimal cost. h(R) is an upper bound on the optimal solution. While searching down the tree, g(V) will grow and h(V) will decrease. When the
algorithm finds a possible solution, h(V) is equal to zero and f(V) is equal to APTC(AA(AG,A(V))). While necessarily $\left|A(V)\right|=K$.\\
Pruning decisions made during the search are based on the value of f(V) and the maximal APTC found so far. We can guarantee that the heuristic search will find the optimal solution only if the function f(V) is an upper bound on the maximal APTC that can be found within the sub-tree rooted at V. If this upper bound is below the maximal
APTC found so far, the sub-tree is pruned, otherwise it is explored in hope of finding a group with a higher APTC. Heuristic functions used for pruning nodes in the search tree are described in section \ref{par: Utility heuristic}.\\
When visiting a node, beside the decision whether to prune the current sub-tree, the algorithm should also determine ${ac}^{best}$. The following section describes two methods of ordering assignments in AC(V) and determining ${ac}^{best}$.
\paragraph{Assignments ordering in the candidates list}\label{par: Element ordering}\mbox{}\\
Admissible heuristic functions guarantee that during the search we will find the optimal solution (note that the search may take an exponential time). The order of the assignments in AC(V) is important for choosing ${ac}^{best}$ and for computing admissible pruning heuristics. The order of assignments in AC(V) can be determined by either their individual utility or their contribution to the given chosen assignments:\\
\subparagraph{High utility first ordering}\mbox{}\\
We can sort the assignments in AC(V) by their individual utility (the utility of an assignment $a \in A(V)$ is APTC(AA(AG,{a}))). That means that the first assignment in A(V) will be the assignment with the highest individual utility.
The computation of APTC(AA(AG,{a})) needs to be done in the pre-processing phase, for each assignment. But, it can be done once for the entire search. Sorting assignments by their individual utility will impose the same order on all candidates lists in the tree. \\
The ideal placement of a fake vulnerability will be by making each assignment create a new path to the goal with a lower cost than the optimal path in the given attack graph AG ($PTC(AA(AG,\left\{a_1\right\}))<PTC(AG)$, when $a_1\in A(V)$). If this situation is impossible, we will need couple of assignments in order to create new path. But, after the attacker will reveal the first fake vulnerability, the other assignments that created the new path, are now redundant. Dur to that, this solution is inferior to the previews one, due to it's wastefulness.\\
By ordering the assignments list by the assignment with the highest utility first, we can find paths that created due to one fake vulnerability assignment.

\subparagraph{Shortest path first ordering}\mbox{}\\
We can sort the assignments in AC(V) by considering the assignments selected so far.\\
If the solution described in the previous paragraph is uncommon, we need to find the shortest paths we can produce to the goal. That means, we need to find the smallest group of fake assignments that can give us a new path to the goal with a lower cost than the optimal path in the given attack graph AG. Thereby we are using our budget carefully.\\
Due to that, the contribution of the assignments in CA(V) should be recalculated each time the search algorithm moves to the right child (by adding ${ac}^{best}$ to A(V)). In pre-processing phase we calculate all the possible paths from the source to the target in the given attack graph AG, and ordering them from the shortest path to the longest. According to the assignments in A(V), we need to find the smallest group of Ips that can create a path to the goal. We used the inverted index data structure in order to that efficiently.\\
In Figure \ref{fig:invertedIndex} we show an example of extracting the best candidates when given a node V in the search tree. By finding for every $IP\in a_i$, where $a_i\in A(V)$, the paths it is part of. Then finding the smallest group of best candidates.
\begin{figure}[ht]
\centering
\includegraphics[width=1\linewidth]{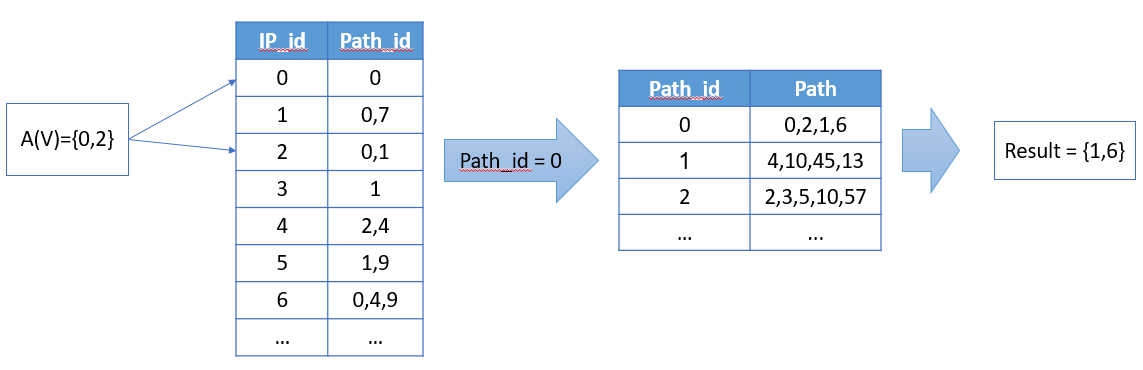}
\caption{Example of extracting the best IP addresses candidates using inverted index}
\label{fig:invertedIndex}
\end{figure}

\paragraph{Heuristic evaluation functions} \label{par: Utility heuristic}\mbox{}\\
In this problem, we need to find the assignment which produces the maximum utility, it is a maximum problem.  A heuristic function in a maximum problem is admissible, if in every node v, in the search space, it holds that h(v), which  is an upper bound on the remaining reward of an optimal solution from the start to a goal via v (denoted as $h^*(v)$) \cite{stern2014max}.\\
The proposed functions differ in their computation time and precision.
\subparagraph{$h_1$ (utility upper bound)}
$$h_1(V) = \sum_{i=1}^{K-\left|A(V)\right|}(APTC(AA(AG,\left\{{ac}_{i}\right\}))),$$ 
\centerline{Where $ac_i\in$ AC(V) is the i'th place in the candidate list AC(V).}\\[5pt]
In $h_1(V)$, we sum the APTC of the graph constructed from applying each assignment from the K - $\left|A(V)\right|$ assignments                                                                                                                                           in the ordered candidates list AC(V). \\[5pt]
\textbf{\underline{Admissibility:}}\\
$h_1$ is not admissible. We will show an example in order to show that:\\
\begin{figure}[ht]
\centering
\centering
\includegraphics[width=1\linewidth]{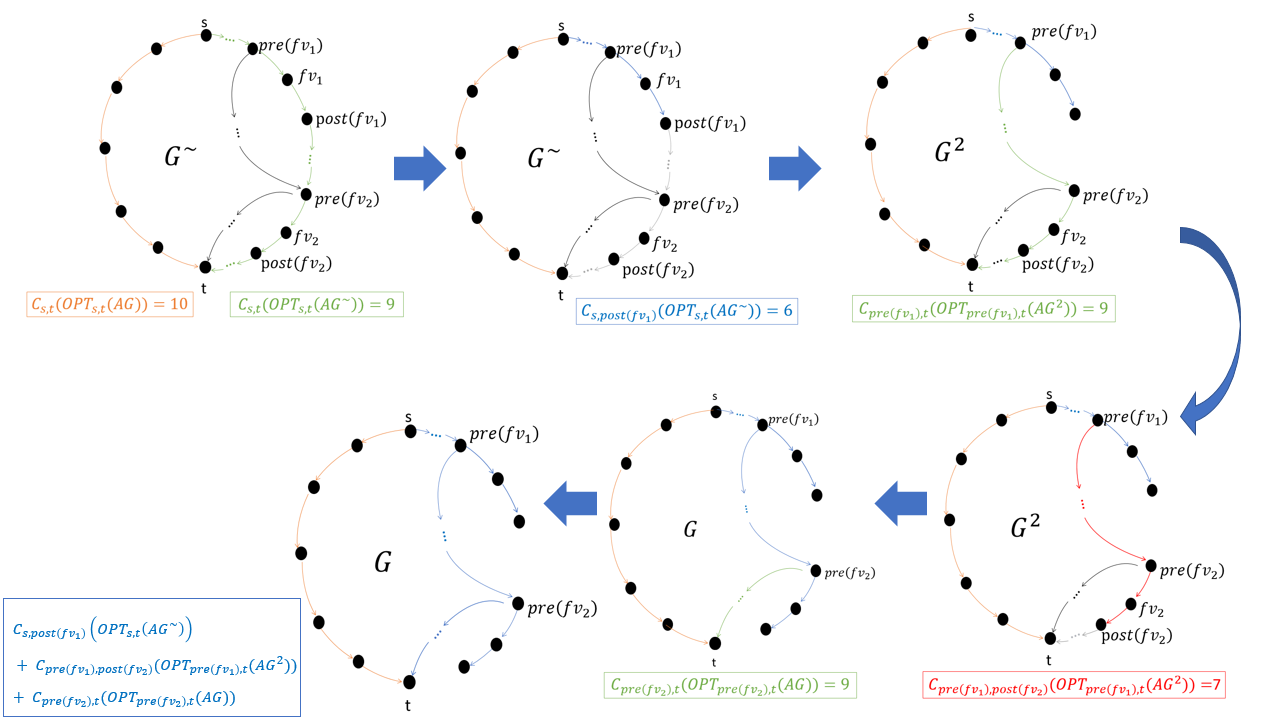}
\caption{Example of non-admissibility of heuristic $h_1$}
\label{fig:admisible_all}
\end{figure}
We denote $pre_p(i)$ as the previous vertex of vertex i in path p.\\
We denote $post_p(i)$ as the next vertex of vertex i in path p.\\
Let AG be the given original graph.
Let the root node in the search tree be R, where A(R) = $\emptyset$ and AC(R) = {$c_1, c_2$}, where $c_1 = (IP_1,v_1)$ and $c_2 = (IP_2,v_2)$ and $IP_1 \ne IP_2$.\\
We denote the attack graph,  produced from assigning the fake assignments $c_1$ and $c_2$, as $AG^\sim$. $AG^\sim = AA(AG,\left\{c_1, c_2\right\})$.\\
Assume that applying both $c_1$ and $c_2$ creates a new path to the target vertex (denoted as $p^\sim$) and PTC($AG^\sim$)<PTC(AG).\\ 
Furthermore, let's assume that applying only $c_1$ or $c_2$ doesn't create a new path to the goal such that APTC($AG^1$)=APTC(AG) and APTC($AG^2$)=APTC(AG), when $AG^1$ and $AG^2$ are the graph constructed from adding assignment $c_1$ or assignment $c_2$ to attack graph AG, accordingly.\\
Let's assume that the new path $p^\sim$ is as follows: $p^\sim = (s,...,(fv)_1,...,(fv)_2,...,t).$ Where $(fv)_1$ and $(fv)_2$ be the fact vertexes added due to adding $c_1$ and $c_2$ assignments accordingly.\\
For simplicity, let's assign the following costs:\\
\centerline{$C_{s,t}(OPT_{s,t}(AG))=10$,   $C_{s,t}(OPT_{s,t}(AG^\sim))=9$,   $C_{s,post(fv_1)}(OPT_{s,t}(AG^\sim))=6$, } \centerline{ $C_{pre(fv_1),t}(OPT_{pre(fv_1),t}(AG^2))=9$,   $C_{pre(fv_1),post(fv_2)}(OPT_{pre(fv_1),t}(AG^2))=7$,}   \centerline{$C_{pre(fv_2),t}(OPT_{pre(fv_2),t}(AG))=9$ }\\
We describe the case of applying both $c_1$ and $c_2$:\\
\\
First, the adversary's attack plan is equal to path $p^\sim$ (because\\ $PTC(AG^\sim) < PTC(AG)$). When it tries to preform the attack plan, it starts at vertex s, and continue until it tries to exploit $post_{p^\sim}(fv_1)$ - the action vertex enabled due to applying vulnerability $v_1$ to IP $IP_1$. The attacker will fail when trying to exploit this vertex, due to the fact that the vulnerability is fake. The adversary now removes the fake vulnerability $v_1$ from IP $IP_1$, and constructs a new attack graph which is similar to $AG^2$, but does not loose the privileges he already gained. Thus, The attacker will try to find the best path from s or $pre(fv_1)$ to t. Since $C_{pre(fv_1),t}(OPT_{pre(fv_1),t}(AG^2))<C_{s,t}(OPT_{s,t}(AG))$ the attacker will choose the optimal path from $pre(fv_1)$ to t.\\
It will continue to exploit vulnerabilities until it reaches $fv_2$ – tries to exploit it and fails, again. The adversary now constructs AG again, while not loosing all the privileges it gained so far. Now, $C_{pre(fv_2),t}(OPT_{pre(fv_2),t}(AG))<C_{s,t}(OPT_{s,t}(AG))$ so the adversary will choose the path from $pre(fv_2)$ to t.\\
So, to sum up the total cost for the adversary during its path:\\
$APTC(AG^\sim)= C_{s,post(fv_1)}(OPT_{s,t}(AG^\sim)) + 
C_{pre(fv_1),post(fv_2)}(OPT_{pre(fv_1),t}(AG^2)) + 
C_{pre(fv_2),t}(OPT_{pre(fv_2),t}(AG)) =6+7+9=22 > 9 = PTC(AG^\sim)$ \\
But, $h_1(R) = APTC(AA(AG,\left\{{c}_{1}\right\}))+APTC(AA(AG,\left\{{c}_{2}\right\}))= 20 < 22 = APTC(AA(AG,\left\{{c}_{1},{c}_{2}\right\}))=h^*(R)$

\subparagraph{$h_2$ (admissible utility upper bound)}\label{sub:h2 admissible}\mbox{}\\ 
$h_2$ is similar to $h_1$, but with an addition:
$$h_2(V) = PTC(AG) + \sum_{i=1}^{K-\left|A(V)\right|} (APTC(AA(AG,\left\{{ac}_{i}\right\}))),$$ 
\centerline{Where $ac_i\in$ AC(V) is the i'th place in the candidate list AC(V).}\\[5pt]
In $h_2(V)$, we sum PTC(AG) to the APTC of the graph constructed from applying each assignment from the K - $\left|A(V)\right|$ assignments in the ordered candidates list AC(V)\\[5pt]
\textbf{\underline{Admissibility:}}\\
The heuristic $h_2$ is admissible, in order to prove it, we need to prove that for all $v$ the following is true:\\
$$h_2(v) \geq h^*(v)$$
We used the following lemma in order to prove it:
\begin{lemma}\label{lemma:k+1 bounded}
Let AG be an attack graph with no fake vulnerabilities, then:\\ $ATPC(AA(AG,\left\{a_1,a_2,...,a_k\right\}))\leq (k+1)*ATPC(AG)$
\end{lemma}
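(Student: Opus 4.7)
The plan is to bound separately (i) the number of iterations of the while-loop in Algorithm \ref{algo: APTC} that can contribute to $aptc$, and (ii) the cost accumulated in any single such iteration, and then multiply the two bounds.

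For (i), I would observe that every iteration which enters the inner conditional ends with a call to $removeNodeFromAttackGraph(AG',a)$, eliminating the fake assignment $a$ responsible for the first fake vertex on the current optimal path. Since $AA(AG,\{a_1,\ldots,a_k\})$ inserts exactly $k$ fake assignments, at most $k$ iterations can enter that branch; together with the terminating iteration in which no fake vertex appears on $OPT(AG')$, this gives at most $k+1$ iterations that contribute to $aptc$.

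For (ii), I would bound the cost added in iteration $i$, namely $\sum_{j=1}^{k_i} C(u_j)$, by the total cost of $OPT(AG'_i)$, where $AG'_i$ is the working attack graph at the beginning of the iteration. The central observation is that $AG'_i$ is obtained from the fake-free original $AG$ by (a) adjoining the vertices and edges associated with the remaining fake assignments, and (b) zeroing the weights of some real configuration vertices the attacker has already paid for. Neither operation removes a real vertex or edge of $AG$, and (b) can only decrease path costs. Hence the optimal attack path $OPT(AG)$ of the original graph is still a valid attack path in $AG'_i$, and its cost in $AG'_i$ is at most its cost in $AG$. Invoking Lemma~\ref{lemma:PTC(AG)=APTC(AG)} for the fake-free $AG$, this gives
$$\text{cost}(OPT(AG'_i)) \;\leq\; PTC(AG) \;=\; APTC(AG).$$
Multiplying by the at-most-$(k+1)$ iterations yields $APTC(AA(AG,\{a_1,\ldots,a_k\})) \leq (k+1)\cdot APTC(AG)$.

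The main obstacle I anticipate is making the preservation argument in (ii) airtight: one must track carefully how the fake-vertex insertions from $AA$ and the cost-zeroing step of Algorithm~\ref{algo: APTC} interact across iterations, and verify that no real vertex or edge required by $OPT(AG)$ is ever destroyed. The cleanest way to do this is by induction on the iteration index, maintaining the invariant that $OPT(AG)$ is a valid attack path in $AG'_i$ whose weight in $AG'_i$ does not exceed $APTC(AG)$, and that each time the loop re-enters the conditional branch the number of remaining fake assignments strictly decreases.
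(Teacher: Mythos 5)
Your argument is correct and is essentially the paper's proof unrolled: the paper establishes the same bound by induction on $k$, where the inductive step isolates one replanning round, bounds its cost by $C_{s,t}(OPT_{s,t}(AG^{(k)})) \leq PTC(AG) = APTC(AG)$ (Lemma~\ref{lemma:PTC(AG)=APTC(AG)}), and charges the remaining rounds to the induction hypothesis --- exactly your decomposition into at most $k+1$ rounds, each costing at most $APTC(AG)$. If anything, your per-iteration bound, via the invariant that $OPT(AG)$ survives in every working graph $AG'_i$ with non-increased cost, is slightly more careful than the paper's inductive step, which simply identifies the post-failure process with $ATPC(AG^{(k-1)})$ and sets aside the effect of the zeroed weights.
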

\begin{proof}
We will prove it using induction:\\
\textbf{Basis:} k=0\\
$ATPC(AA(AG,\left\{\right\})) = APTC(AG) \leq (0+1)*ATPC(AG)$\\
Let $AG^(k-1) = AA(AG,\left\{a_1,a_2,...,a_(k-1)\right\})$
Let's assume that for k-1 the following is true:\\
$ATPC(AG^(k-1))\leq k*ATPC(AG)$\\
We add another assignment to the attack graph $AG^(k-1)$, so $AG^(k) = AA(AG^(k-1),\left\{a_k\right\})$. The best case is that the new assignment constructs a new path to the goal such that $C_(s,t)(OPT_(s,t)(AG^(k))\leq C_(s,t)(OPT_(s,t)(AG^(k-1))$ and of course $C_(s,t)(OPT_(s,t)(AG^(k)) \leq C_(s,t)(OPT_(s,t)(AG)$. The attacker will try to execute $OPT_(s,t)(AG^(k))$, then after trying to exploit the fake vulnerability, it fails and tries to find a new path in $AG^(k-1)$ (let's assume that in the best case, the privileges gained so far did not help). So, in the best case: $ATPC(AG^k) = C_(s,t)(OPT_(s,t)(AG) + ATPC(AG^(k-1)) \leq PTC(AG) + k*ATPC(AG) =^* (k+1) * ATPC(AG) $ \\
* - lemma \ref{lemma:PTC(AG)=APTC(AG)}
\end{proof}
\begin{lemma}\label{lemma:APTC smaller then}
Let AG be an attack graph with no fake vulnerabilities, and $a_i=<IP_i, Vul_i>$ is a possible assignment, then:\\ $ATPC(AA(AG,\left\{a_i\right\}))\geq ATPC(AG)$
\end{lemma}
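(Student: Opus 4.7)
The plan is to compare the attacker's execution in $AG$ and in $AG' = AA(AG,\{a_i\})$ by directly reasoning about the two possible behaviors of the optimal plan in $AG'$. Since $AG$ has no fake vulnerabilities, Lemma \ref{lemma:PTC(AG)=APTC(AG)} already gives $APTC(AG) = PTC(AG)$, so it suffices to prove $APTC(AG') \geq PTC(AG)$. First I would let $P^{*} = OPT(AG')$ and split into two cases according to whether $P^{*}$ uses the fake configuration vertex introduced by $a_i$.

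In the first case, $P^{*}$ contains no fake vertex. Because deleting the fake assignment leaves $P^{*}$ untouched, $P^{*}$ is a valid attack path in $AG$, and hence $C_{P^{*}} \geq PTC(AG)$. The $APTC$ loop then detects no fake vertex on the first iteration and exits accounting only for the real cost of $P^{*}$, giving $APTC(AG') = C_{P^{*}} \geq PTC(AG) = APTC(AG)$.

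In the second case, $P^{*}$ passes through a fake vertex; let $u_k$ be the first such vertex on $P^{*}$. The attacker pays $C_1 = \sum_{j=1}^{k} C(u_j)$ before the fake is detected, after which the algorithm removes $a_i$ (returning to $AG$) and zeros out the costs of the previously visited nodes $u_1,\dots,u_{k-1}$. Let $P'$ be the optimal path in this modified graph. I would now evaluate the cost of $P'$ in two ways: in the original $AG$ it is a legitimate path, so $\sum_{v \in P' \cap N_c} C_v \geq PTC(AG)$; in the modified graph its cost equals that same sum minus the contributions of zeroed nodes it re-uses, and those savings are bounded above by $\sum_{j=1}^{k-1} C(u_j) \leq C_1$. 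Combining the two gives that the cost of $P'$ in the modified graph is at least $PTC(AG) - C_1$, and therefore $APTC(AG') = C_1 + (\text{cost of } P' \text{ in the modified graph}) \geq PTC(AG) = APTC(AG)$.

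The main obstacle I expect is the bookkeeping around exactly which nodes are zeroed and which are removed when $a_i$ is discarded, and making sure $C_1$ is charged consistently with Algorithm \ref{algo: APTC} (the attacker pays $C(u_k)$ even though the exploit fails). Once that is pinned down, the ``savings $\leq C_1$'' step is immediate, because the zeroed prefix is precisely the real cost the attacker has already invested, and re-using those vertices on the recovery path cannot refund more than was spent.
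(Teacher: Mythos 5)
Your proof is correct and follows essentially the same route as the paper's: charge the prefix cost up to the first fake vertex, then lower-bound the recovery cost by observing that the recovery path together with the already-paid (zeroed) prefix is a real attack path in $AG$ and therefore cannot undercut $OPT(AG)$ --- the paper phrases this last step as a contradiction with the optimality of $OPT(AG)$ rather than as your ``savings $\leq C_1$'' bound, but it is the same observation. Your version is marginally more complete in that it explicitly treats the case where the optimal path in $AA(AG,\left\{a_i\right\})$ avoids the fake vertex, which the paper leaves implicit.
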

\begin{proof}
If the assignment of $Vul_i$ in $IP_i$, creates a new optimal path to the goal (that means $PTC(AA(AG,\left\{a_i\right\}))\leq PTC(AG)$), so $ATPC(AA(AG,\left\{a_i\right\}))$ will be equal to the sum of the following paths:
\begin{itemize}
\item First, the attacker will construct an attack plan from the attack graph $AG' = AA(AG,\left\{a_i\right\})$. The adversary will try to exploit all the exploit vertexes to the goal, until it will try to exploit  $Vul_i$ and fails. Until then, it will pay \textbf{$C_{s,action_(Vul_i)}(AG')$} - the cost of the path from source (s) to the action vertex added due to the existence of the fake vulnerability $Vul_i$. 
\item Now, after the adversary realizes that the vulnerability it tries to exploit ($Vul_i$) is fake, it will remove the fake information from the attack graph and re-plans its path to the goal. The new attack graph considers the fact that the attacker does not loose privileges. So x=$min \left\{ C_{s,t}(OPT(AG)),C_{prev(action_(Vul_i)),t}(OPT(AG)) \right\} $ is added to path cost of the previous step.
\end{itemize}
If $ATPC(AA(AG,\left\{a_i\right\})) = C_{s,action_(Vul_i)}(AG') + C_{s,t}(OPT(AG))$ then\\ $ATPC(AA(AG,\left\{a_i\right\}))\geq C_{s,t}(OPT(AG)) = ATPC(AG)$ \\\\
If $ATPC(AA(AG,\left\{a_i\right\})) = C_{s,action_(Vul_i)}(AG') +C_{prev(action_(Vul_i)),t}(OPT(AG))$ then\\ 
$ATPC(AA(AG,\left\{a_i\right\}))\geq C_{s,t}(OPT(AG)) = ATPC(AG)$ \\
otherwise, let $p'= <s, ..., prev(action_(Vul_i), ..., t >$ the path from the source s to the target t via the previous vertex to $action_(Vul_i)$, then $C_{s,t}(p') < C_{s,t}(OPT(AG)) $ in contrast to the Optimality of OPT(AG). So, necessarily $ATPC(AA(AG,\left\{a_i\right\}))\geq ATPC(AG)$.


\end{proof}\mbox{}\\
Using the above lemma we prove the \textbf{admissibility of $h_2$}.
\\We can see that $h^*(V) = APTC(AA(AG,\left\{ac_1,ac_2,...,ac_{K-\left|A(V)\right|}\right\})) - \\APTC(AA(AG,A(V)))$ \\
Where $ac_i$ is the i'th element in $AC(V)$.\\
Because it is the remaining reward of an optimal solution from the start to a goal via V\\[5pt]
$h_2(V) = PTC(AG) + \sum_{i=1}^{K-\left|A(V)\right|} (APTC(AA(AG,\left\{{ac}_{i}\right\}))) \geq^{Lemma \ref{lemma:APTC smaller then}} PTC(AG)+ (K-\left|A(V)\right|)*APTC(AG) \geq^{Lemma\ref{lemma:PTC(AG)=APTC(AG)}} APTC(AG)+ (K-\left|A(V)\right|)*APTC(AG) = (K+1-\left|A(V)\right|)*APTC(AG)\geq^{Lemma \ref{lemma:k+1 bounded}} APTC(AA(AG,\left\{a_1,a_2,...,a_{K-\left|A(V)\right|}\right\})) \geq h^*(V)$
\\We can see that $h_2(V)\geq h^*(V)$ and as result $h_2$ is admissible.

\paragraph{Assignments cost} -  under the constrain that we have K assignments to add to the given attack graph, the cost of every assignment $a\in A(V)$ is 1. Hence, if the budget is x, we can choose x assignments to add, and that is exactly what we want.

\paragraph{Pre-Processing}\mbox{}\\
During the pre-processing step, we calculate the $PTC(AG,\left\{a\right\})$, for each potential $a \in AC(root)$.\\
Further more, we eliminate incompatible vulnerabilities for each IP, by filtering vulnerabilities that create conflicts with other information known about the given IP address. This is done by filtering vulnerabilities that do not match the operating system that exists on the targeted computer. 
\\
To enable the filtering, information about the IP address is collected using the Nessus scanner \cite{nessus}. The information about each vulnerability is gathered from the National Vulnerability Database \cite{NVD}.

\section{Evaluation} 
\label{sec:eval}
We need to produce an attack graph in order to  model all the possible paths from the adversary's source (usually the Internet) to its target in the network.
Two types of input are required
to create an attack graph: a list of existing vulnerabilities in the network hosts (Data collection), which can be gathered using Nessus scan \cite{nessus}, and the topology of the network. 

\subsection{Data} 
\label{sec:data}
In order to measure the effectiveness of our method we needed real data that represent a real network. 
We collected data using a Nessus \cite{nessus} scan of an \textbf{organization}. 
This network consists of 150 hosts and had 394 vulnerabilities. 
We used an automatic tool, called MulVAL, for constructing the attack graph from the real network.
\note{RP}{Cite our previous papers for the data.}

The organizational network exhibited a standard topology of a small enterprise: 
\begin{itemize}
\item DMZ - the network accessible from the Internet.
\item Internal network.
\item Secured network - the network in which all the important assets were located.
\end{itemize}

\subsection{Attack graph generation tool}\label{subsub: Attack graph generation tool}
Due to all the strengths we mention in section \ref{sec:lag}, we chose logical attack graph as a model for our problem. In order to produce a logical attack graph, we used a state of the art attack graph generation tool called MulVAL \cite{ou2005mulval}.

\paragraph{Attack planner} \label{subsub: Attack planner}
In order to obtain the optimal attack plan, with minimal cost, we use Jorg Huffman's converter \cite{sarraute2013penetration} \cite{sarraute2013pomdps} in order to convert our problem to PDDL (Planning Domain Definition Language) file.\\
We solve the PDDL file obtained in the previous step using the fast-downward planner \cite{helmert2006fast} which we executed with the A* search algorithm using the landmark-cut heuristic function. The landmark-cut function is admissible, so the algorithm would eventually find an optimal solution to the goal.

\begin{figure}[ht]
\centering
\centering
\includegraphics[width=1\linewidth]{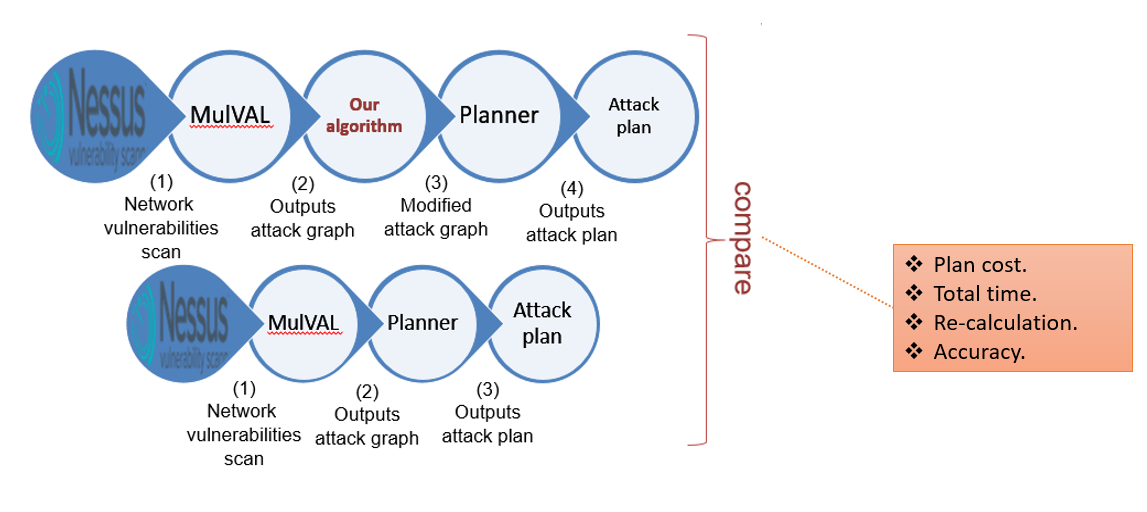}
\caption{Evaluation process}
\label{fig:Evaluation}
\end{figure}
\subsection{Proof of concept}\label{subsub: Proof of concepts}
We used Fred Cohen’s \emph{Deception ToolKit (DTK)} \cite{DTK} in order to generate deceptive information about a host (runs Linux OS). We used the guidelines detailed in \cite{DTKGuide}. Applying the DTK on a system causes it to appear to attackers as having a large number of widely known vulnerabilities. When the attacker issues a command or request the DTK results with a pre-defined response, in order to encourage the attacker to continue its exploration of the host.
The Deception tool kit is a set of C programs, Perl scripts, and additions/modifications to host configuration files.\\
\subsection{Defining evaluation measures } \label{subsection:parameters}
We evaluated our method using four evaluation measures:
\begin{enumerate}
\setlength{\itemsep}{0pt}
\setlength{\parskip}{0pt}
\item $p_1 =$ Number of recalculation - this number counts the number of times the attacker needs to recalculate its plan due to fake vulnerability. It is increased each time the adversary's attack plan includes of fake vulnerabilities and re-planning is needed.
\item $p_2 =$ Total time - total planning time (the time it took for the attacker to plan its path to reach the goal).
\item $p_3 =$ Relative increase of the attacker's cost - calculated as $\frac{total\; cost}{original\;cost}$.\\ The total cost, is the total cost the attacker invest during the attack of the network after the obfuscation, and the original cost is the total cost the attacker invest during the attack of the original network, with no obfuscation.\\
\item $p_4 =$ Precision - The number of fake assignments the attacker encounters in its path to the goal, relative to the number of fake assignments placed in the attack graph. More accurate the precision = $\frac{number\;of\;recalculation\;-\;1}{budget}$
\end{enumerate}
The procedure of calculating these measures is presented in Algorithm \ref{algo: Evaluating success measures}. 

\begin{algorithm}[H]
\caption{Evaluating success measures}
\label{algo: Evaluating success measures}
\begin{algorithmic}[1]
\renewcommand{\algorithmicrequire}{\textbf{Input:}}
\renewcommand{\algorithmicensure}{\textbf{Output:}}
\Require AG, numOfFakeAssignments $\gets$ number of fake assignments placed in the attack graph, originalCost
\Ensure reCalculation, relativeIncreaseOfCost, totalTime, precision
\State reCalculation, totalPlanCost, totalTime $\gets$ 0
\State $i \gets 1$
\State Let $Assignments = ((IP_1,v_1), (IP_2, v_2) ... (IP_m,v_n))  \Leftarrow  $ all the assignment of fake vulnerabilities to IPs.
\State timeStart = now()
\State AG' $\gets$ AA(AG,assignments)
\State $p_i=(u_1, u_2, ... u_m) \gets OPT(AG')$
\State timeEnd = now()
\State flag = True
\While{flag}
\State flag = False
\If{$ \exists k$, such that $u_k$ is a vertex created from some fake assignment $a \in Assignments$}
\State flag = True
\State totalPlanCost = totalPlanCost + $\sum_{j=1}^{k} C(u_j)$
\State totalTime = totalTime + (timeEnd - timeStart)
\State currExpNodes $\gets$ expanded nodes during the current planning
\For {vertex $\in$ OPT(AG')}
\If {vertex != $u_k$}
\State cost($u_k$) = 0
\Else
\State break
\EndIf
\EndFor
\State timeStart = now()
\State AG' $\gets$ removeVertexFromAttackGraph(AG',a)
\State $p_i=(u_1, u_2, ... u_m) \gets OPT(AG')$
\State timeEnd = now()
\EndIf
\State $i \in i+1$
\State reCalculation = reCalculation +1
\EndWhile
\State precision $\gets \frac{reCalculation-1}{numOfFakeAssignments}$
\State relativeIncreaseOfCost $\gets \frac{total\; cost}{original\;cost}$
\State return reCalculation, relativeIncreaseOfCost, totalTime, precision
\end{algorithmic}
\end{algorithm}

\clearpage
\textbf{Correctness: } 
Due to the monotonic assumption \cite{ammann2002scalable} (mentioned in section \ref{sec:problem}), we assume that if an attacker gains some privileges in the network, it does not loose them. The evaluation measures are calculated considering this assumption. 
\\For calculating the evaluation measures we identified  two cases:
\begin{enumerate}
\item If the attacker tries to exploit a fake vulnerability, it invests the associated cost and then replans a path to the goal, taking into account the achievements made thus far. The parameters are calculated as follows, in each iteration:
\begin{itemize}
\item Total time is calculated as the sum of all the time it took to produce all the attack plans so far. 
\item Expected cost is calculated as the sum of:
\begin{itemize}
\item The costs of all of the vertexes in the attack plan that preceded the vertex which contains the fake vulnerability (including the vertex itself).
\item The total cost the attacker spent until this point (in all of its previous iterations).
\end{itemize}
\item Number of recalculations is computed as the number of attack plans that include fake vulnerabilities, so far, plus 1 (current).
\end{itemize}

\item If the attacker successfully reaches the goal, it means that the attack plan was devoid of fake vulnerabilities. In this case, the measures are calculated as follows:
\begin{itemize}
\item Total time is calculated as the sum of the time it took to plan the path to the target and the total time the attacker spent during all the attack plans so far. 
\item Expected cost is calculated as the sum of the costs of all of the vertexes in the attack plan and the total cost the attacker spent spent during all the attack plans so far. 
\item Number of recalculations is calculated as the number of recalculations so far plus 1.
\end{itemize}
\end{enumerate}
\par The relative increase of the attacker's cost calculated as the total cost (the total cost the attacker invest during the attack of the network after the obfuscation) encountered out of the original cost (the  total cost the attacker invest during the attack of the original network, with no obfuscation).  And mathematically, $\frac{total\; cost}{original\;cost}$.\\

\par The precision is calculated as the fraction of the number of fake vulnerabilities the attacker encountered out of the total number of fake vulnerabilities added to the attack graph ($\frac{reCalculation-1}{numOfFakeAssignments}$).\\

We compute the $planCost$, taking into account the fact that the adversary stops carrying out its attack plan when it is interrupted by a fake vulnerability. 
Therefore, the cost of the attack plan is the sum of all of the exploitable vulnerabilities, including the cost of trying to exploit the fake vulnerability.
\par In order to follow the monotonic assumption, in each iteration, we nullify the cost of all of the vertexes in the attack plan until the first fake vertex. From this vertex on, the attacker cannot execute the attack as planned or gain additional privileges.

\subsection{Results} \label{subsub: Results}
In this section we present the results of the two approaches: the random and the AI search approach. 
First, we evaluate each of the two approaches separately, and then we compare the two.\\\\
\textbf{Platform.} In all the experiments presented in this section we used a virtual machine who runs an Ubuntu 12.04.5 with a Linux kernel 3.13.0-95. System with a 4 Virtual CPU, 2Tb Disk and 8G RAM.\\
The resources were fully dedicated to our experiments.\\
On each experiment, we followed the steps described in section \ref{sec:method} to generate an attack graph and assign fake vulnerabilities.

\subsubsection{Random approach}
\paragraph{Implementation:}\mbox{}\\
The purpose of this experiments were as follows:
\begin{enumerate}
\item Check the feasibility of the attack graph obfuscation.
\item Check the precision and scalability of the random approach.
\end{enumerate}
\begin{itemize}
\item \textbf{Data set:}\mbox{}\\
This experiments conducted to a sub-network ($NET_{80}$) of the network described in section \ref{sec:data} that consists of 80 hosts (see the attack graph in Figure \ref{fig:dt80}).\\
\begin{figure}
  \centering
  \includegraphics[width=.6\linewidth]{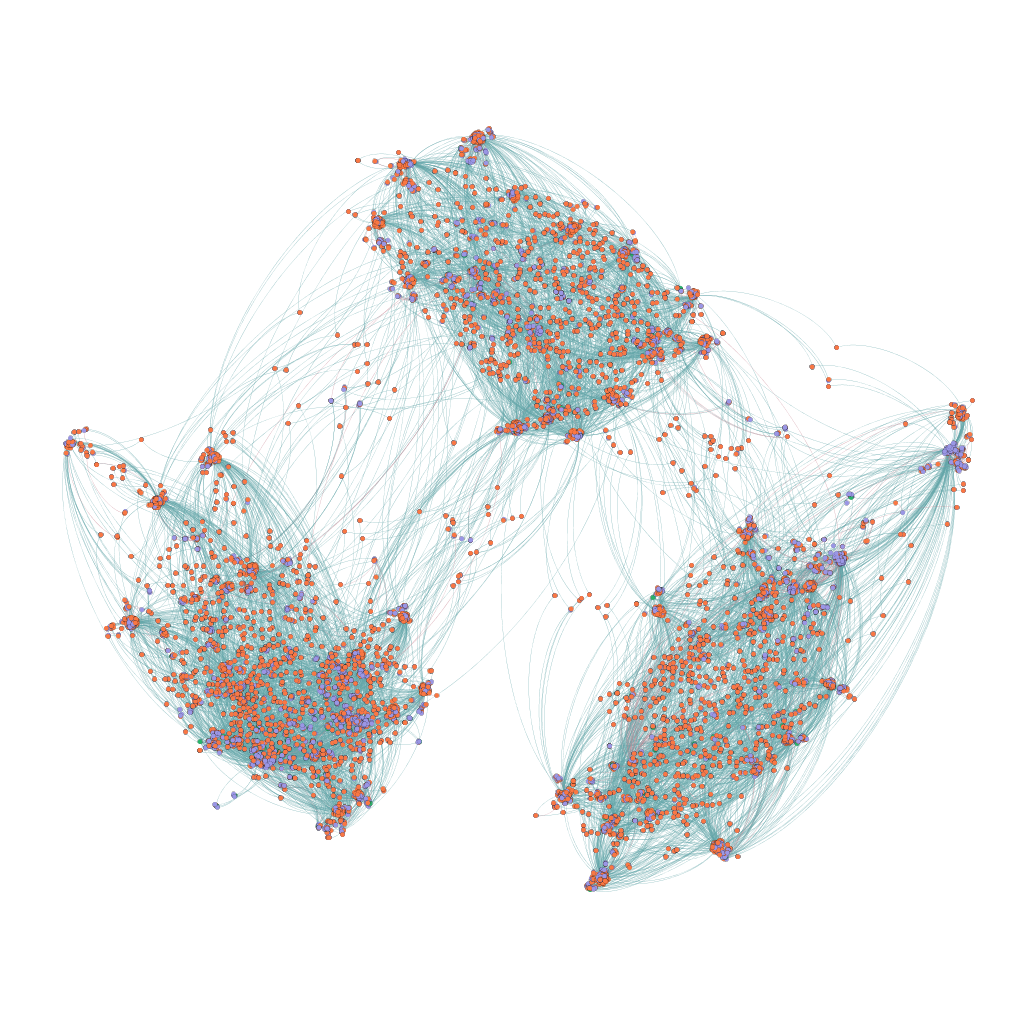}
  \caption{Attack graph of $NET_{80}$}
  \label{fig:dt80}
\end{figure}

\item \textbf{Parameters:}
\begin{itemize}
\item \textbf{Budget:}
\begin{enumerate}
\item 10\% of the network (8 hosts) chosen as deceptive.
\item 30\% of the network (24 hosts) chosen as deceptive.
\item 50\% of the network (40 hosts) chosen as deceptive.
\end{enumerate}
\end{itemize}
\end{itemize}
For each budget amount, we repeated the experiments five times.  
At each trial, the number of chosen fake vulnerabilities for each host is randomly chosen.  
At each trial, a new obfuscated attack graph is constructed.
The results in the following section present the average for all the trials conducted for each evaluated budget amount including zero (no deception).
Table \ref{table:1} provides the attack graphs' data, which includes the attack graphs' average number of vertexes, edges, and vulnerabilities relative to the budget.
\begin{table}[ht]
\centering
\begin{tabular}{||c | c | c | c||} 
 \hline
 Deceptive IPs & Vertexes & Edges & Vulnerabilities\\ [0.5ex] 
 \hline\hline
 0\% & 10147 & 16591 & 140\\
 \hline
 10\% & 10815 & 18421 & 164.2\\
 \hline
 30\% & 12392.8 & 22591 & 220.6\\
 \hline
 50\% & 13936.8 & 26734.4 & 257.2\\ [1ex] 
 \hline
\end{tabular}
\caption{Attack graph data}
\label{table:1}
\end{table}

\paragraph{Results:}\mbox{}\\
We present results for the attacker cost during the planning and attack phases.
The results are presented in Figure \ref{fig:results} and Figure \ref{fig:randomResultsPrecision} followed with further explanations.\\
\begin{figure}[ht]
\centering
\centering
\includegraphics[width=.9\linewidth]{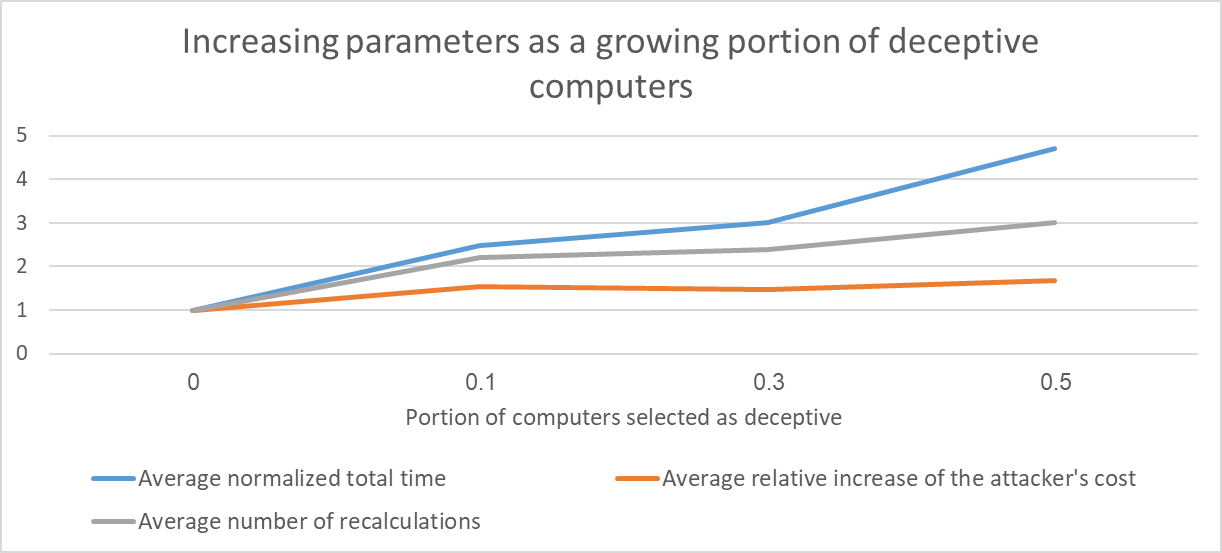}
\caption{Random approach results}
\label{fig:results}
\end{figure}

\begin{figure}[ht]
\centering
\centering
\includegraphics[width=.9\linewidth]{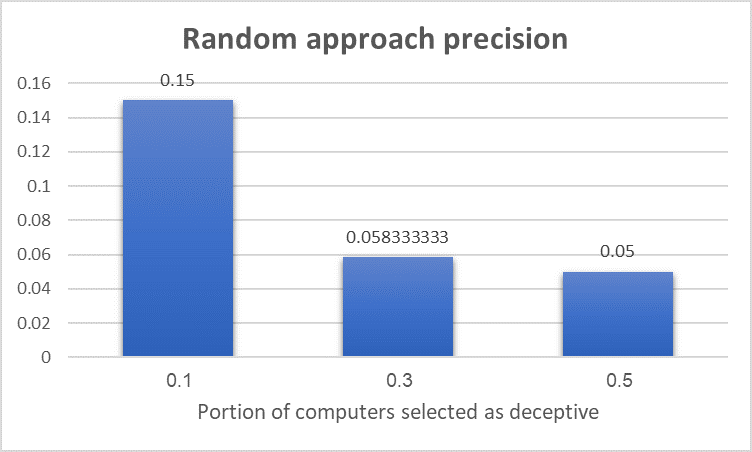}
\caption{Random approach precision}
\label{fig:randomResultsPrecision}
\end{figure}

In Figure \ref{fig:randomResultsPrecision}, we can see the precision of the random approach. We can see that as the budget increase, the total cost increase, but not in the same proportion as the number of computers defined as deceptive. Thus, we can see that even by choosing 10\% of the computers as deceptive, we can get a result close to the result obtained from the choice of 50\% of the computers as deceptive.

In Figure \ref{fig:results}, we can see the values of the attacker's total time invested, relative increase of the attacker's cost and the number of recalculation, relative to the portion of computers chosen as deceptive. We can see that as the portion of deceptive computers increased, the attacker's effort increased. 
Next we discuss the results depicted in Figure \ref{fig:results}.
\begin{itemize}
\item We measure the attacker's effort during the planning phase (\(p_2\)). 
Table \ref{table:2} shows the increase in planning time relative to the baseline.
\begin{table}[ht]
\centering
\begin{tabular}{||c | c ||} 
 \hline
 \% of deceptive computers & total time\\ [0.5ex] 
 \hline\hline
  10\% & 2.5 \\ 
 \hline
 30\% & 3 \\
 \hline
 50\% & 4.7\\
 \hline
\end{tabular}
\caption{Planing time relative to baseline}
\label{table:2}
\end{table}

It can be observed that even for only 10\% deceptive computers the planning time is more than double, and it grows with the number of deceptive computers.

\item We evaluate the attacker's effort during the attack phase through the relative increase of the attacker's cost (\(p_3\)).\\ 
Table \ref{table:4} presents the relative increase of the attacker's cost.\\
\begin{table}[ht]
\centering
\begin{tabular}{||c | c ||} 
 \hline
 \% of deceptive computers & relative increase of the attacker's cost\\ [0.5ex] 
 \hline\hline
 10\% & 1.46\\ 
 \hline
 30\% & 1.53 \\
 \hline
 50\% & 1.6\\
 \hline
\end{tabular}
\caption{Total attack path cost relative to baseline}
\label{table:4}
\end{table}

As observed from Table \ref{table:1} and Table \ref{table:4} it can be seen that just a small amount of fake vulnerabilities is needed to effectively waste the adversary's resources, and more specifically that by randomly adding vulnerabilities to less than 30\% of the computers in the network can cost the adversary almost 1.5 times more.

\item We evaluate the number of recalculations, which is the number of times the attacker needs to recalculate its plan due to fake vulnerabilities. Table \ref{table:Number of recalculations} presents the number of recalculation relative to the number of deceptive computers chosen.\\
\begin{table}[ht]
\centering
\begin{tabular}{||c | c ||} 
 \hline
 \% of deceptive computers & Number of recalculations\\ [0.5ex] 
 \hline\hline
 10\% & 2.2\\ 
 \hline
 30\% & 2.4 \\
 \hline
 50\% & 3\\
 \hline
\end{tabular}
\caption{Number of recalculations}
\label{table:Number of recalculations}
\end{table}

As observed from Table \ref{table:Number of recalculations} it can be seen that just a small amount of fake vulnerabilities is needed to effectively duplicate the needed recalculations. More specifically that by randomly adding vulnerabilities to less than 10\% of the computers in the network the number of recalculations duplicates.
\end{itemize}

\clearpage

\subsubsection{AI search approach} \label{section:AI search approach}
\paragraph{}{Implementation:}\mbox{}\\
This experiment conducted in order to:
\begin{enumerate}
\item Examine which algorithm parameters provide better results in terms of run-time and precision.
\item Determine whether the AI search approach has better precision than the random approach.
\item Determine whether the AI search approach is feasible for large networks.
\end{enumerate}
\begin{itemize}
\item \textbf{Data set:}\mbox{}\\
For evaluation needs, and in order to test our approach on challenging and interesting networks, we created four sub-networks from the network described in section \ref{sec:data}, using the same topology:
\begin{enumerate}
\item $NET_4$ - a subset network which created with 4 hosts. \\ You can see Figure \ref{fig:DT_4} the constructed attack graph (AG). \label{network4}\\
\textbf{Number of vertexes: } 96\\
\textbf{Number of edges: } 136\\
\item $NET_{10}$ - a subset network which created with 10 hosts \\ You can see Figure \ref{fig:DT_10} the constructed attack graph (AG).\label{network10}\\
\textbf{Number of vertexes: } 486\\
\textbf{Number of edges: } 728\\
\item $NET_{20}$ - a subset network which created with 20 hosts. \\ You can see Figure \ref{fig:DT_20} the constructed attack graph (AG).\label{network20}\\
\textbf{Number of vertexes: } 2179\\
\textbf{Number of edges: } 3685\\
\end{enumerate}

\begin{figure}[ht] 
    \centering
    \includegraphics[width=.5\linewidth]{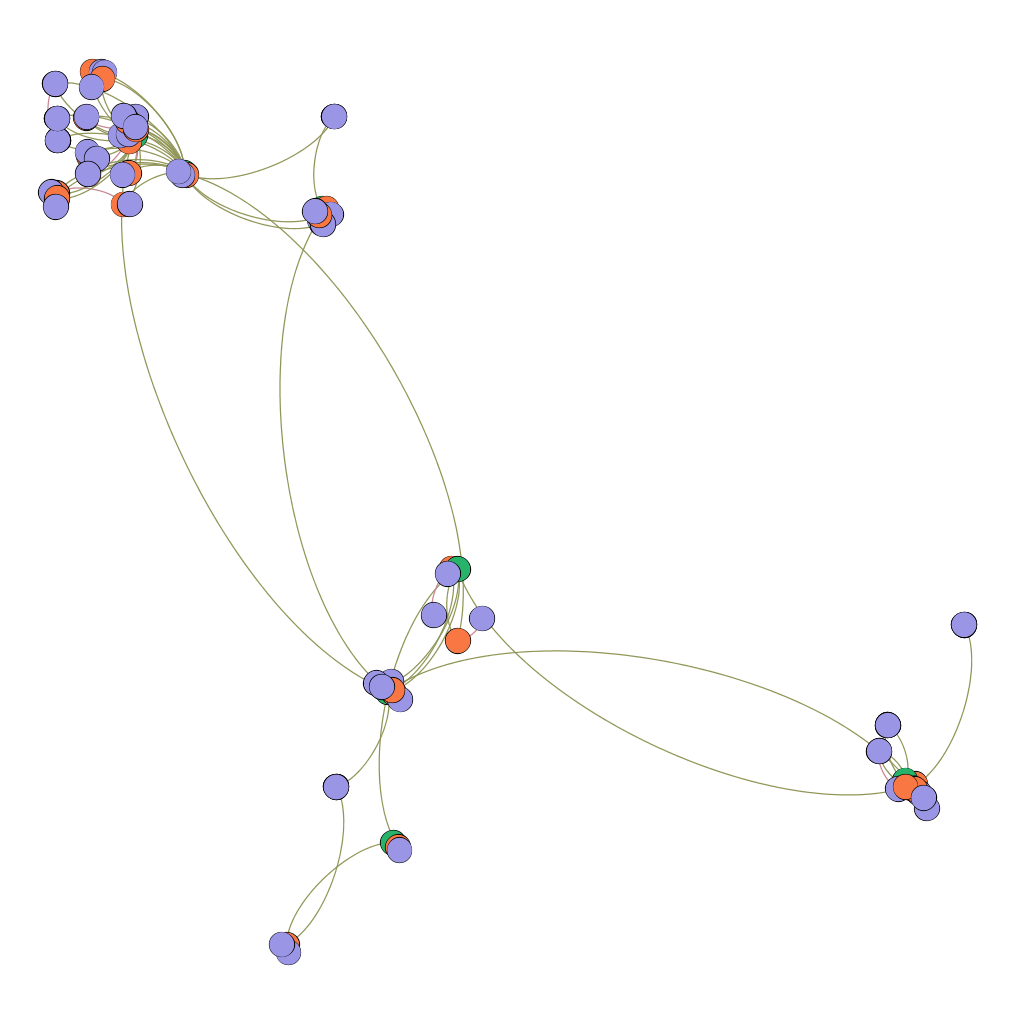} 
    \caption{Attack graph of $NET_{4}$}
    \label{fig:DT_4}
\end{figure}
\begin{figure}[ht] 
    \centering
    \includegraphics[width=.5\linewidth]{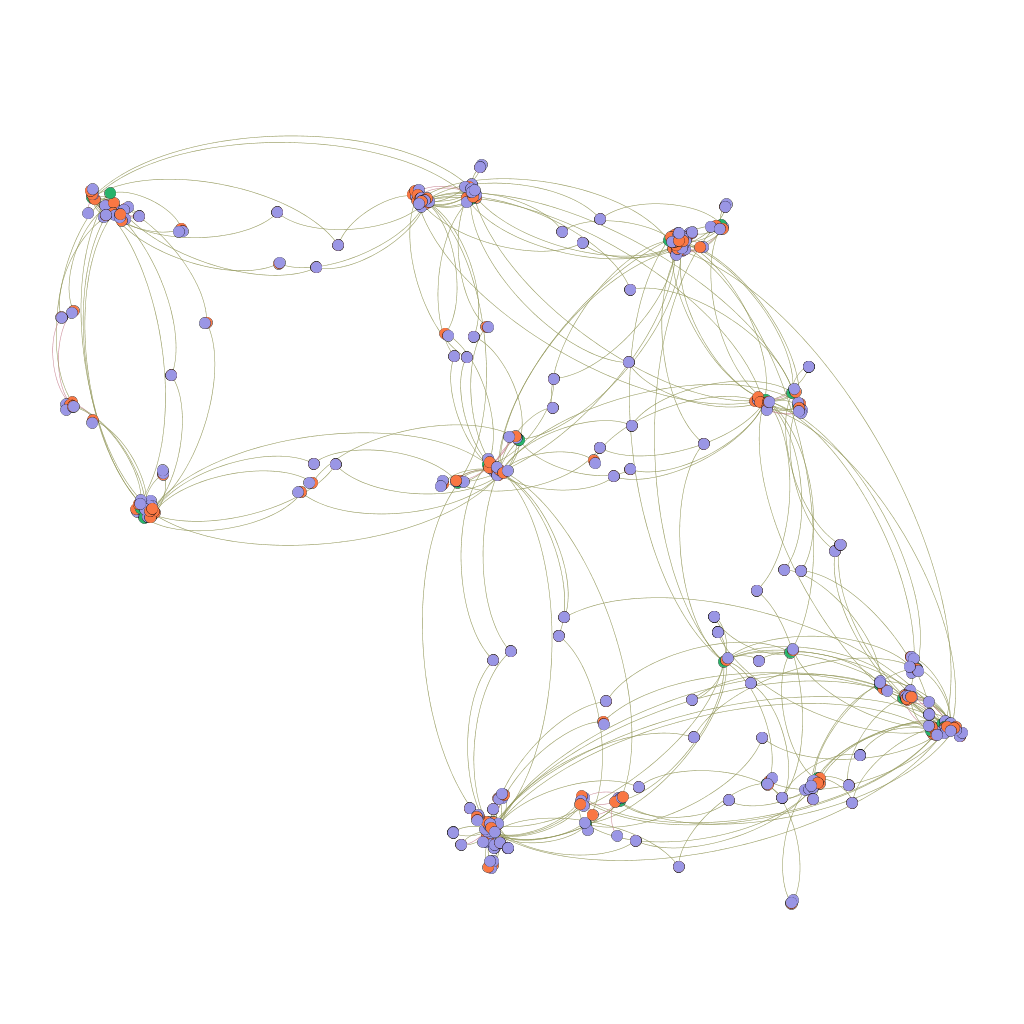} 
    \caption{Attack graph of $NET_{10}$} 
    \label{fig:DT_10}
\end{figure} 
\begin{figure}[ht] 
    \centering
    \includegraphics[width=.5\linewidth]{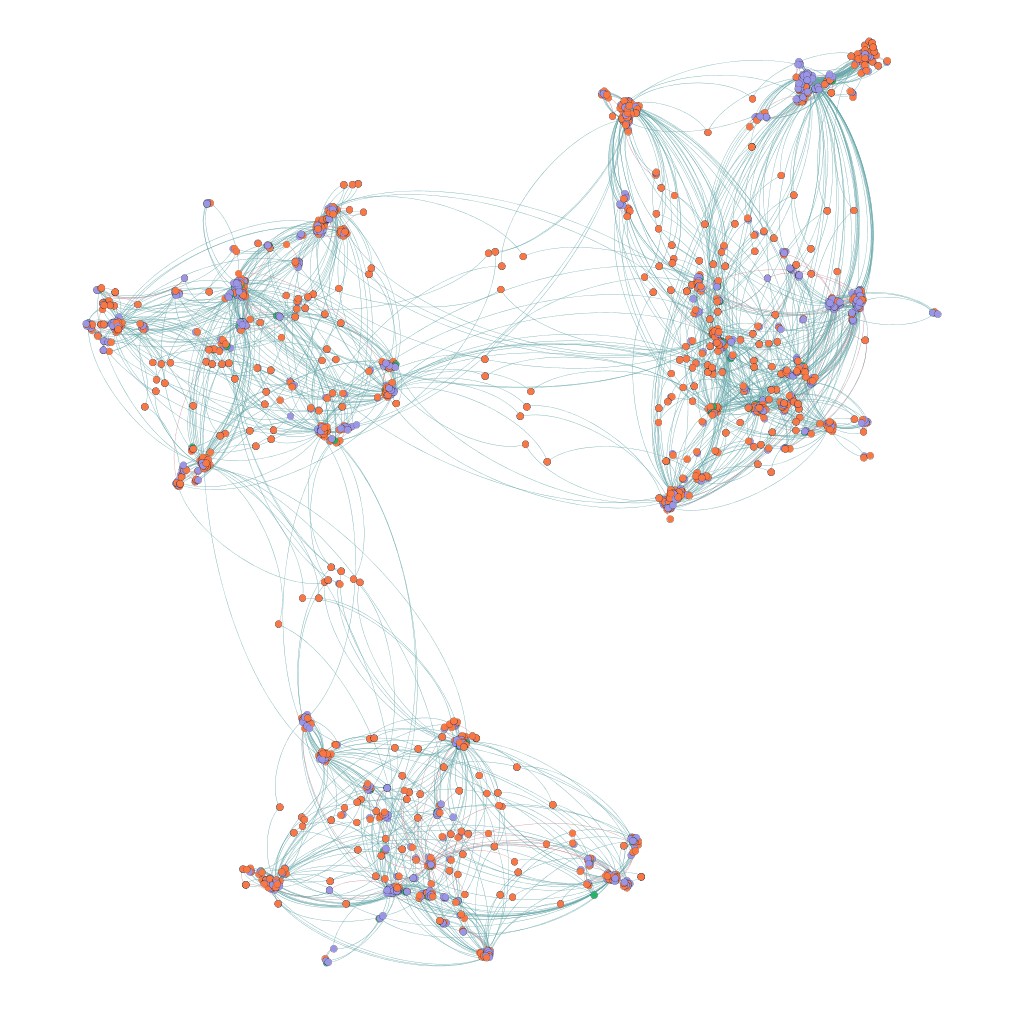} 
    \caption{Attack graph of $NET_{20}$} 
    \label{fig:DT_20}
\end{figure}

\item \textbf{Parameters:}\mbox{}\\
For the AI search approach we compared numerous configurations of the parameters (described in section \ref{subsub: AI search based approach}):
\begin{itemize}
\item Two search algorithms:
\begin{enumerate}
\item A*
\item DF-BnB
\end{enumerate}
\item Three assignments ordering of the candidates list:
\begin{enumerate}
\item High utility first ordering
\item Shortest path first ordering
\item Random ordering
\end{enumerate}
\item Two heuristic evaluation functions:
\begin{enumerate}
\item $h_1$ (utility upper bound)
\item $h_2$ (admissible utility upper bound)
\end{enumerate}
\item Budgets:
\begin{enumerate}
\item Budget of 3 assignments
\item Budget of 5 assignments
\end{enumerate}
\end{itemize}

\end{itemize}

\paragraph{Results}
\begin{enumerate}
  \item \textbf{Heuristic evaluation functions} - the heuristics evaluation functions differ in the number of expanded nodes during the search, which effect the total time and the achieved total cost (which reflect the amount of effort the adversary is putting into it's attack).
\begin{itemize}
\item \textbf{Precision -} which is the number of fake vulnerabilities the attacker try to exploit during the attack, relative to the budget. In other words, the precision is equal to $\frac{number\;of\;recalculation\;-\;1}{budget}$, due to the fact that the number of recalculations is the number of attack plan consists fake vulnerabilities + 1 (the attack plan which consists valid path to the goal).
\\\textbf{The precision is 1, for each of the heuristics.} Which means that all the assignments chosen by the search algorithm were deceptive to the attacker. In practice, each fake assignment, was tested by the attacker. This maximizes the adversary's setback while trying to reach its goal in the targeted network. But, if the budget is greater than the assignments needed, our algorithm choose the minimum number of assignments which can optimally increase the attacker's efforts (see Figure \ref{fig:budgetUsage}). For example, in $NET_{20}$, when the budget was 5, the best fake vulnerability assignment were the same as the optimal assignment for a budget of 4 ($=5*0.8$), so there was no need for 5 assignments.
\begin{figure}[ht]
\centering
\centering
\includegraphics[width=.7\linewidth]{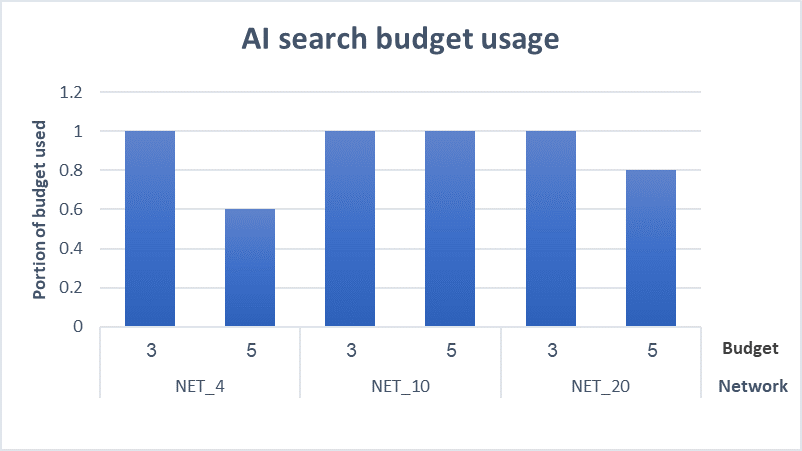}
\caption{Budget Usage}
\label{fig:budgetUsage}
\end{figure}

\item \textbf{Relative increase of the attacker's cost} \label{explanation: Relative increase} = $\frac{total\; cost}{original\;cost}$.\\ The total cost, is the total cost the attacker invest during the attack of the network after the obfuscation, and the original cost is the total cost the attacker invest during the attack of the original network, with no obfuscation.\\
We can see that the non-admissible heuristic can give a non-optimal solution, which is shown in the results presented in Figure \ref{fig:heuristicsTotalCost}. In $NET_{20}$ network, the AI search, using the non-admissible heuristic precision was 1 but the chosen hosts did not gave an optimal solution. 
\begin{figure}[ht]
\centering
\centering
\includegraphics[width=.9\linewidth]{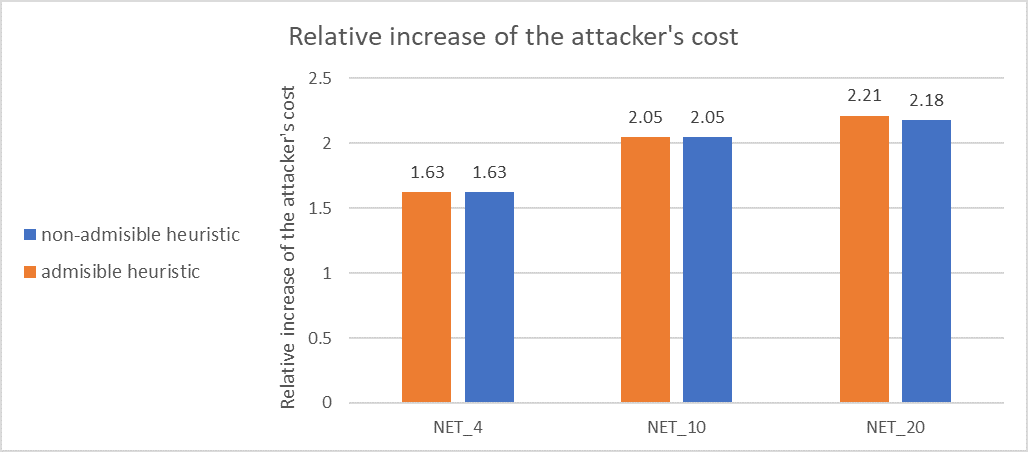}
\caption{Relative increase of the attacker's cost for each heuristic}
\label{fig:heuristicsTotalCost}
\end{figure}

\item \textbf{Expanded nodes -} The difference between the expanded nodes, for each of the heuristics is negligible. For the networks $NET_{4}$ and $NET_{10}$ there were no difference in the number of expanded nodes, the only difference were for the network $NET_{20}$.\\
We omitted the assignments random ordering, due to the fact that the random selection can not provide valid information in this subject. Figure \ref{fig:heuristicExpandedNodes20}, present the number of expanded nodes for network $NET_{20}$.
\\The expanded nodes where higher in the case of using the admissible heuristic with DFBnB search algorithm, for budget of five hosts, by using the high utility first ordering.\\
The actual difference is in 7\% more, for the admissible heuristic, which is negligible.
\begin{figure}[ht]
\centering
\centering
\includegraphics[width=1\linewidth]{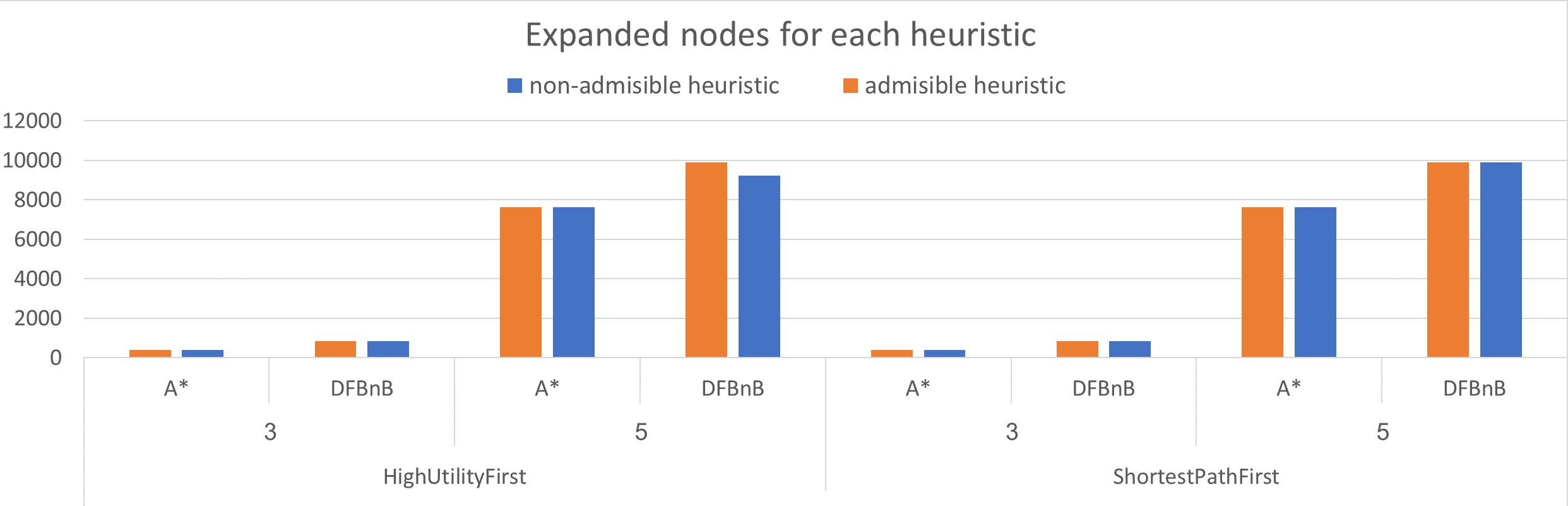}
\caption{Heuristics expanded nodes, in network $NET_{20}$}
\label{fig:heuristicExpandedNodes20}
\end{figure}

\item \textbf{Search total time - } the search total time, is a direct result of the number of the search expanded nodes. Same here, the differences in the total time between the admissible heuristic to the non-admissible is negligible. For the networks $NET_{4}$ and $NET_{10}$ there were no difference in the total time, the only difference were for network $NET_{10}$.\\
Figure \ref{fig:heuristicsTotalTime_20}, present the total search time for the network $NET_{20}$.
\begin{figure}[ht]
\centering
\centering
\includegraphics[width=1\linewidth]{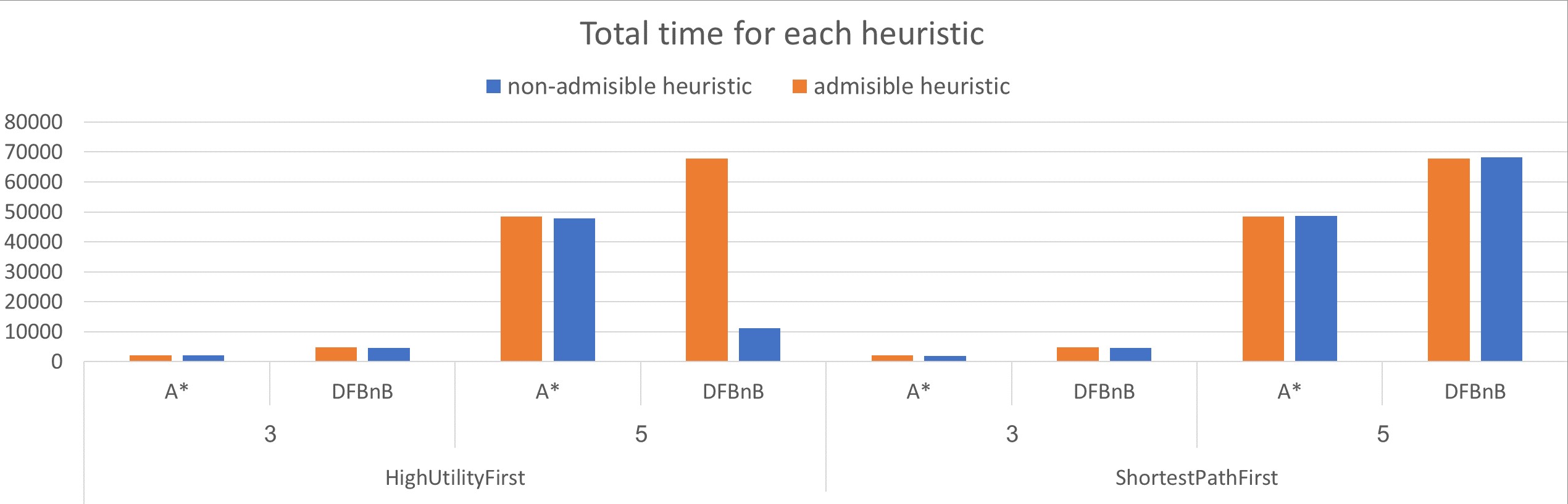}
\caption{Heuristics total time, in network $NET_{20}$}
\label{fig:heuristicsTotalTime_20}
\end{figure}

\end{itemize}
\item \textbf{Assignments ordering} - the purpose of the various heuristics for candidates ordering (in the ordered assignment candidates list AC(V)) is to reduce the \textbf{number of nodes created during the search.} As a result, the total time of the search will be reduced.\\
From the results (see Figure \ref{fig:OrderingExpandedNodes}) we can see that usually the random approach performs worse than the proposed ordering heuristics - shortest path first and high utility first. whose results are usually similar.
\begin{figure}[ht]
\centering
\centering
	\includegraphics[width=1\linewidth]{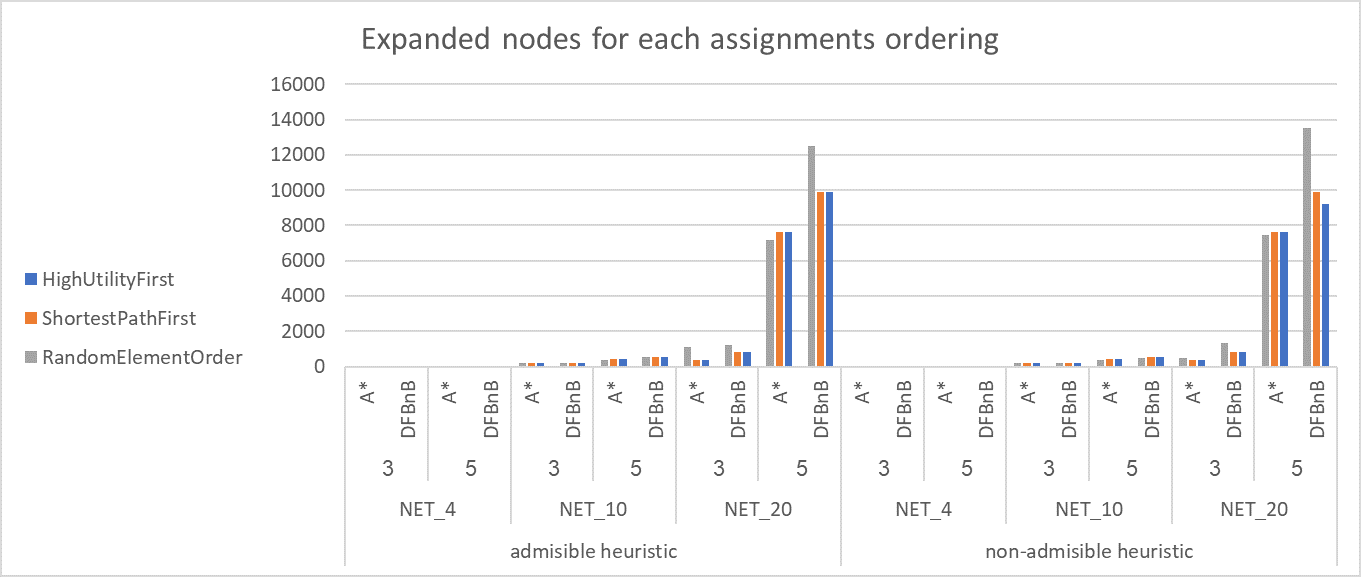}
\caption{Expanded nodes by candidate assignments ordering type}
\label{fig:OrderingExpandedNodes}
\end{figure}

\item \textbf{Search algorithms comparison} - 
\begin{itemize}
\item \textbf{Expanded nodes -} we can see that A* is a better search algorithm for solving this problem, because the number of expanded nodes is larger for the DFBnB algorithm, than the A* algorithm (as shown in Figure \ref{fig:searchAlgorithmExpanded}). We can see that for all network sizes (4, 10 and 20 hosts) and for all the tested budgets (3 or 5), the number of expanded nodes using the A* is smaller than the number of expanded nodes using DFBnB search algorithm. 
\begin{figure}[ht]
\centering
\centering
\includegraphics[width=1\linewidth]{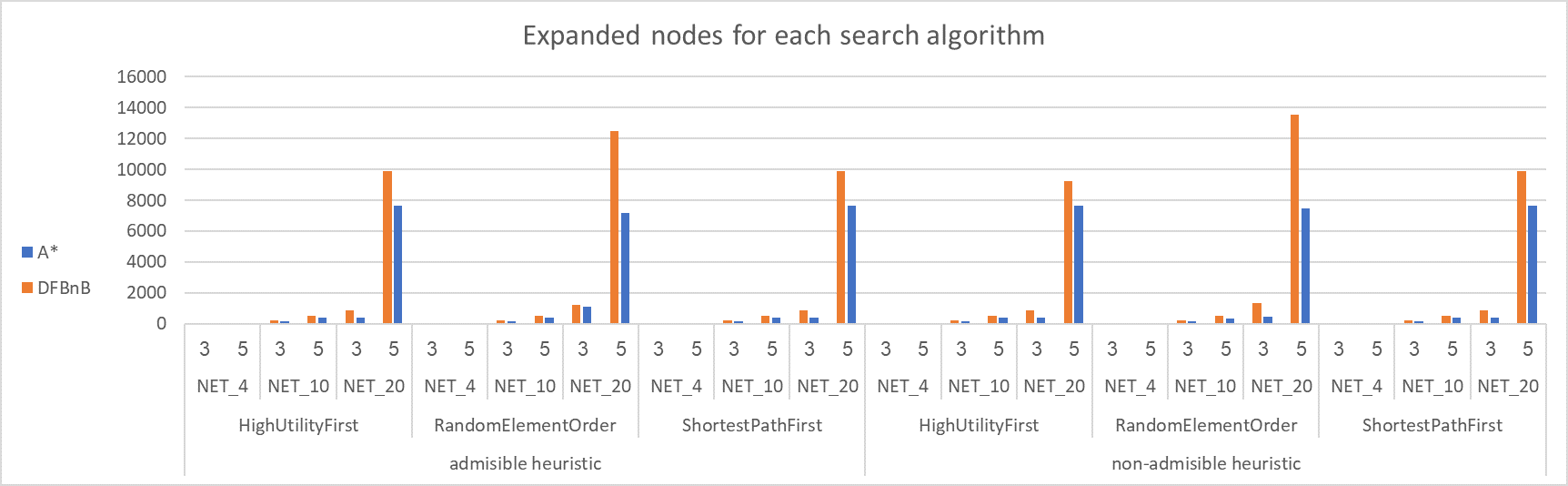}
\caption{Expanded nodes for each search algorithm}
\label{fig:searchAlgorithmExpanded}
\end{figure}

\item \textbf{Total search time -} we can see that as a result of the number of expanded nodes, the total time it took to find the optimal solution in A* is better than in DFBnB (as shown in Figure \ref{fig:searchAlgorithmTime}).
\begin{figure}[ht]
\centering
\centering
\includegraphics[width=1\linewidth]{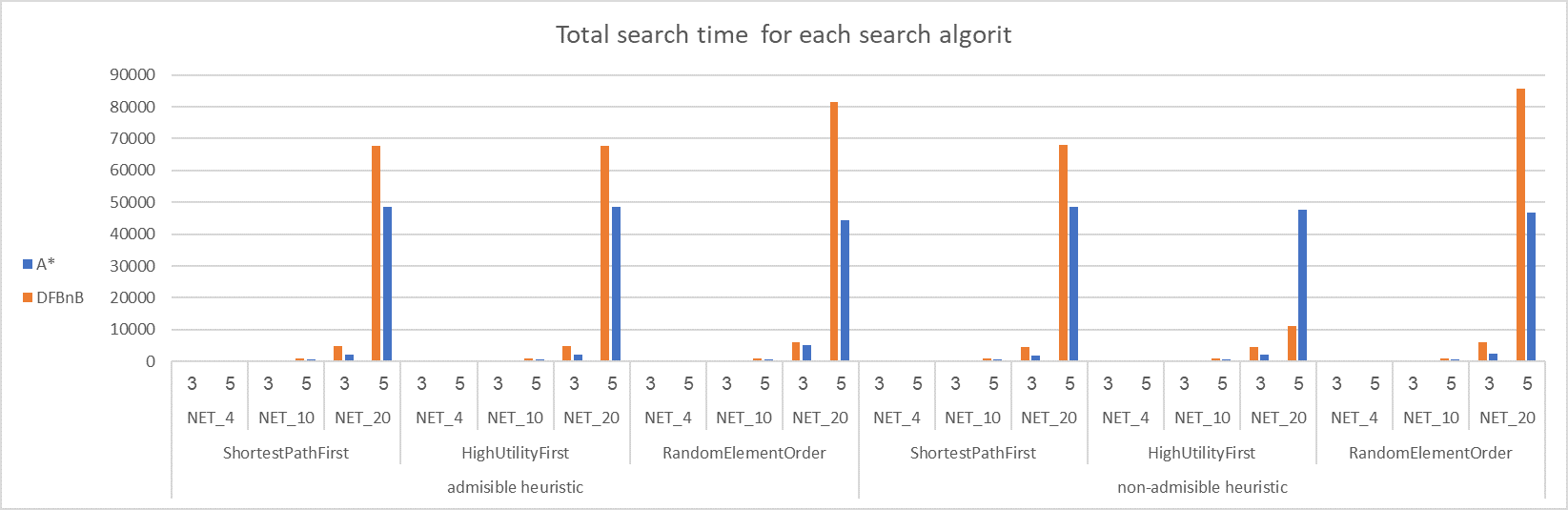}
\caption{Total search time for each search algorithm}
\label{fig:searchAlgorithmTime}
\end{figure}

\end{itemize}
From the results it can be concluded that in order to solve this problem, A* preform better than DFBnB.

\end{enumerate}

\subsubsection{Random approach vs. AI search approach}
\paragraph{Implementation:}\mbox{}\\
In this section we demonstrate that the AI approach is preferable to the random approach in 
utilizing the budget in the most efficient way.

\begin{itemize}
\item \textbf{Data set:}
For evaluation needs, and in order to test our approach on challenging and interesting networks, we created four sub-networks from the network described in section \ref{sec:method}, using the same topology:
\begin{enumerate}
\item $NET_{10}$ - as shown in Figure \ref{fig:DT_10} (page \pageref{network10}).
\item $NET_{20}$ - as shown in Figure \ref{fig:DT_20} (page \pageref{network20}).
\item $NET_{50}$ - a subset network which created with 50 hosts.\\
You can see in Figure \ref{fig:dt50} the constructed attack graph (AG).\\
Number of vertexes: 10,203\\
Number of edges: 16,621\\

\begin{figure}
  \centering
  \includegraphics[width=.4\linewidth]{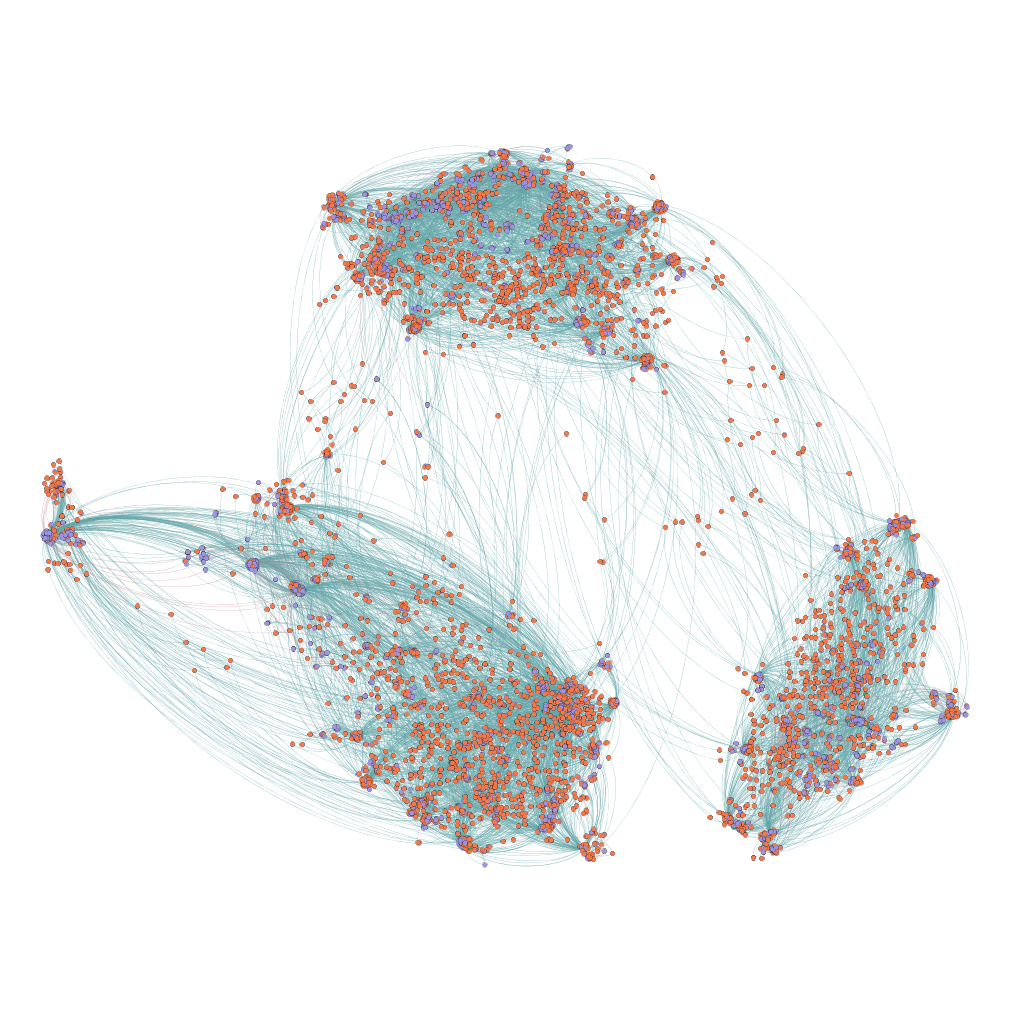}
  \caption{Attack graph of the sub network with 50 hosts}
  \label{fig:dt50}
\end{figure}

\item $NET_{80}$ - as shown in Figure \ref{fig:dt80} (page \pageref{fig:dt80}). The attack graph consists of 10,147 vertexes and 16,591 edges.
\end{enumerate}
\item \textbf{Parameters:}
\begin{itemize}
\item \textbf{AI approach parameters:}
\begin{enumerate}
\item \textbf{Budget} - We started the trials on the AI approach with small budgets in order to demonstrate that the impact on the attacker (relative increase of the attacker's cost) can be significant, even with a small budget. Then, we examine what should be the provided budget as input to the random algorithm, in order to reach the same results.
This enables us to show the significant differences between the two different algorithms, even for low budgets. 
\item  \textbf{Search algorithm} - We chose the A* search algorithm, due to its effectiveness in terms of runtime, as seen in the results presented in section \ref{section:AI search approach}. 
\item  \textbf{Assignments ordering of the candidates list} - We chose the high utility first ordering, due to its effectiveness in terms of runtime, as seen in the results presented in section \ref{section:AI search approach}. 
\item  \textbf{Heuristic evaluation function} - We chose the admissible utility upper bound evaluation function, due to the fact it provides an optimal solution and is not poor in terms of running time, as shown in the results presented in section \ref{section:AI search approach}. 
\end{enumerate}

\item \textbf{Random approach parameters:}
\begin{enumerate}
\item \textbf{Budget} - We examine all the possible budgets from 1 deceptive computer to $n-1$, where n is number of hosts in the network. Then we matched the results to the results given by the AI approach. 
\item \textbf{Trials} - For each budget, we repeated the experiments five times. The results in the following section present the average for all the trials conducted for each budget. Therefore, at each trial, a new obfuscated attack graph is constructed. \\ 
At each trial, the number of chosen fake vulnerabilities for each host is randomly chosen.
\end{enumerate}
\end{itemize}
\end{itemize}

\paragraph{Results}\mbox{}\\
In this section we demonstrate that the AI approach is preferable to random approach in terms of optimal results.\\
The two approaches has clear trade offs:
\begin{itemize}
\item \textbf{Total search time -} As in all cases and since the random approach have no "smart" choices, the random approach is much faster than the AI search algorithm.
\item \textbf{Relative increase of the attacker's cost -} The effect on the attacker described and explained in page \pageref{explanation: Relative increase} as the relative increase of the attacker's cost.\\
Unlike the random approach, by using the AI search approach we can maximize the given budget for hardening the given network.\\
We can see from Table \ref{table: AI VS. Random}, that in order to get the same effect on the attacker, the AI search can provide a solution with much less deceptive computers and much less fake vulnerabilities. \\
We can see that the number of computers chosen as deceptive in the random approach is constantly very high in comparison to the number of computers needed to be deceptive in the AI search approach. \\
Furthermore, as expected when the number of hosts in the network increase, "guessing" the right deceptive computers is much harder.\\
The number of computers chosen as deceptive is at least 1.6 times higher in the random approach, and as the network grows it gets up to 7.3 times more.\\
The number of fake vulnerabilities assignments is at least 14 times higher in the random approach, and as the network grows it gets up to 73 times more.

\end{itemize}

\begin{table}[ht] 
\caption{AI search approach VS. Random approach}
\label{table: AI VS. Random}
  \begin{tabular}{|l|l|p{0.9in}|p{0.7in}|p{0.9in}|p{0.8in}|}
  \hline
    Approach & Network & Relative increase of the attacker's cost & Number of deceptive IPs & Number of fake vulnerabilities added  & Attackers total planning time (S)\\ \hline \hline \hline
    AI search & $NET_{10}$ & 1.8 &  \textcolor{ForestGreen}{3} & \textcolor{ForestGreen}{3} & 0.22\\ \hline
    Random & $NET_{10}$ & 1.8 & \textcolor{Red}{5} & \textcolor{Red}{42} & 0.196	\\ \hline \hline
    AI search & $NET_{10}$ & 2.3 & \textcolor{ForestGreen}{5} & \textcolor{ForestGreen}{5} & 0.3 \\ \hline
    Random & $NET_{10}$ & 2.3 & \textcolor{Red}{9} & \textcolor{Red}{103} & 0.26 \\ \hline \hline \hline
    
    AI search & $NET_{20}$ & 1.7 & \textcolor{ForestGreen}{2} & \textcolor{ForestGreen}{2} & 2.58 \\ \hline
    Random & $NET_{20}$ & 1.7 & \textcolor{Red}{7} & \textcolor{Red}{60} & 2.527 \\ \hline \hline
    AI search & $NET_{20}$ & 2.1 & \textcolor{ForestGreen}{3} & \textcolor{ForestGreen}{3} & 4.22 \\ \hline
    Random & $NET_{20}$ & 2.1 & \textcolor{Red}{10} & \textcolor{Red}{94} & 2.86 \\ \hline \hline \hline
    
    AI search & $NET_{50}$ & 1.6 & \textcolor{ForestGreen}{2} & \textcolor{ForestGreen}{2} & 55.8 \\ \hline
    Random & $NET_{50}$ & 1.6 & \textcolor{Red}{14} & \textcolor{Red}{148} & 43.544 \\ \hline \hline
    AI search & $NET_{50}$ & 2.1 & \textcolor{ForestGreen}{4} & \textcolor{ForestGreen}{4} &  164.5 \\ \hline
    Random & $NET_{50}$ & 2.1 & \textcolor{Red}{22} & \textcolor{Red}{238} &  156.33 \\ \hline \hline \hline
    
    AI search & $NET_{80}$ & 1.7 & \textcolor{ForestGreen}{2} & \textcolor{ForestGreen}{2} &  37.58\\ \hline
    Random & $NET_{80}$ & 1.7 & \textcolor{Red}{12} & \textcolor{Red}{122} & 34.492 \\ \hline \hline
    AI search & $NET_{80}$ & 2.1 & \textcolor{ForestGreen}{3} & \textcolor{ForestGreen}{3} & 110.7\\ \hline
    Random & $NET_{80}$ & 2.1 & \textcolor{Red}{22} & \textcolor{Red}{219} &  100.58 \\ \hline \hline
    
  \end{tabular}
\end{table}


\section{Related work}
\label{sec:related}

The approach presented in this research utilizes a defense strategy that provides false information to attackers. 
The falsified information is optimized based on the attack graph representation of the protected network.  
Next we discuss the major topics related to this strategy including: deception, attack graph games, obfuscation, and others.

\subsection{Attack Graph Analysis} 
\label{sub: Attack plans}

Simulating an attacker can be done by multiple ways: using a heuristic search, planner, greedy algorithm and more. Planning an attack can be online or offline. When the environment is known, planning can be done offline, meaning that solutions can be found and evaluated prior to execution. 
But, in the real world the attacker can't always know the target's network topology, configurations and more information. 
Thus, an online approach is more suitable for attack situations.

Nirnay Ghosh et al. \cite{ghosh2009quantitative} proposed a methodology for generating an optimal attack path in wireless networks, which allows the network administrator to detect the most risk-prone attack path, i.e., the path that  is most likely to be chosen by the attacker in order to gain a critical resource in the network. By achieving that, the security administrator can prioritize the security mechanism. In this article, they used PSO (particle swarm optimization) in order to find the optimal attack path using attack vector metrics.  Each wireless node considers as a PSO particle and the motion of each particle replicate the mobility of the wireless nodes. Like the particles in PSO, the attacker's aim is always directed towards the target. Unlike the classic PSO, every node in the wireless have been assigned with weight that represent a severity measures obtained from customized risk parameters, also, the position of the node is immaterial.  
An artificial case study was presented, which modeled network with 100 hosts. After a pair of source and destination is given, the execution time grows exponentially as the number of visited nodes (=n) increases. Which means the time complexity is $O(c^n)$. It is important to mention that this is the first time such technique where the warm optimization concept is used to generate an attack path.

Another work of Nirnay Ghosh et al. is netsecuritas which \cite{ghosh2015netsecuritas} presents a heuristic-based attack graph generation algorithm which takes into consideration both attack graph properties and predefined security budget of any organization. The attack graph type represented is an exploit dependency graph – an acyclic directed graph which constructed from two types of nodes -  exploit and security condition and two types of edges – one represents a relation of require (condition to exploit) and the second is an imply relation (from exploit to condition node). It takes a list of exploits and a list of hosts as input, and generates attack graph in the form of an adjacency list. They use a backward chaining approach in order to find paths that terminate at the initial conditions. The algorithm starts from the goal state and then finds a list of exploits that can yield the desired condition. From this list of exploits, the algorithm makes a decision to choose the easiest one. The easiest one is choose based on the assumption that when multiple vulnerabilities exist in a host, the attackers choose the one easiest to exploit. The other exploits pushed into a stack. Now the algorithm finds a list of hosts through which that exploit can be executed, from this list it chooses the easiest to exploit and the others push into the stack. The easiest host is selected under the assumption that the host with the higher number of vulnerabilities is more likely to be detected and used by the attacker to execute the exploit. When the exploration is completed, successfully (reaches the initial condition) or in the dead end, the algorithm backtracks and evaluates other hosts or exploits from the stack. The presented algorithm is working at $O(n*e)$ when n is the number of hosts and e is the total number of exploits.

Penetration testing is a common methodology for identifying security risks, by running friendly attacks\cite{sarraute2013pomdps}. Simulated penetration testing automates this process, by designing a model of the system in question, and using model-based attack planning to generate the attacks. Basic methods from this domain were discussed in previous parts of this literature survey, and include, among others, the automatic generation of attack graphs by MulVAL, or model checking methods. Those models require complete knowledge of the network, hosts configuration, connectivity, and more for them to work. This level of knowledge of the system attacked is rarely available in the process of hacking a system. Ideally, a simulated penetration testing would conduct its attacks in a way similar to that of a real attacker, including the information gathering phase, and reasoning. Hence the goal of simulated penetration testing is much more ambitious: to realistically simulate a human hacker. In a more functional view, the simulated penetration testing model space spans a broad range of sequential decision making problems. In his work, Jorg Hoffmann \cite{sarraute2013penetration} has  analyzed prior work in AI and other relevant areas, and conducted a systematization of this model space, highlighting a multitude of interesting challenges to AI sequential decision making research.

Jorg \cite{sarraute2013penetration} Identified two major dimensions characterizing this space:
\begin{enumerate}[(A)]
\item  How to handle the uncertainty from the point of view of the attacker.
\item The degree of interaction between individual attack components
\end{enumerate}
To illustrate the point of (A), we need to remember that in many cases, exploits can fail, depending on the specific configuration needed for the exploit to run successfully, or protection measures taken by the target. To illustrate the point of (B) we should consider the implications of an attacker's action on the environment, such as crashing a machine after a failed exploitation attempt, or the ability to change network configuration. In the realm of uncertainty, there are 3 main models:
\begin{enumerate}
\item No uncertainty - each action succeeds
\item MDP - Markov Decision Process
\item POMDP - Partially Observable Markov Decision Process
\end{enumerate}
MDP in the context of attack graphs is a model in which each action has a number of possible outcomes, depending on a probability assigned to this action. Many models take the simplifying assumption that there are only two possible outcomes to each action (exploitation) which are success / failure. Of course in practice, this is not always the case. POMDP could be seen as an abstraction of MDP, in the sense that in POMDP, along with the basic actions of an MDP, there are sensing actions, which can alter the probability of the basic actions.\\
Until this day, modeling attack graphs in MDP or POMDP has been mainly theoretical and suffered from severe scalability issues. Very similar to the problem the academic world has encountered trying to solve attack graphs using model checkers. Relevant research done in this area could be seen in Jorg's previous research \cite{sarraute2013penetration} about POMDP in penetration testing, and Durkota's research \cite{durkota2014computing} work about modeling attack graph as MDP. Although both approaches shown above seem very interesting, they still suffer from scalability problems, and require a significant amount of simplifying assumptions that are not always possible. One of the reoccurring simplifying assumption is the knowledge of the network connectivity \cite{yuen2015automated}. The actions an attacker would take to discover the network could be used in various scenarios to detect an attacker in the network. 

Durkota et al. \cite{durkota2014computing} describes a method for representing an attack graph as an MDP. In this representation, each state in the MDP is a set of performed actions (exploits) and labels whether the action was successful or not. Each action is assigned a cost and a probability. Then, an optimized MDP solver is being used to narrow down the search space of the optimal policy. Some of the optimization techniques described are branch and bounds, which is used to prune unfavorable branches of the tree. Another optimization technique Durkota is offering is sibling-class theorem, which is used to identify multiple actions which results in the same effect. In these cases, only the lowest cost action is investigated further.
Compared to basic MDP solvers, Durkota's algorithm performed faster, and used less memory. Although, even with the optimization, the algorithm often ran out of memory, and it is assumed it will not scale well to large enterprise networks.  

Malte Helmert \cite{helmert2006fast} published the Fast Downward planner, which is a classical planning system based on heuristic search. It can deal with general deterministic planning problems encoded in the propositional fragment of PDDL2.2, including advanced features like ADL conditions and effects and derived predicates (axioms). Like other well-known planners such as HSP and FF, Fast Downward is a progression planner, searching the space of world states of a planning task in the forward direction. This planner can be used in the cyber field by converting any attack problem to pddl problem and solve it using this planner. 
\\We used this tool combined with the Jork Huffman POMDP converter \cite{sarraute2013penetration} and some guidelines from Durkota et al work \cite{durkota2014computing} in order to achieve "a worthy opponent" to our method.

\subsection{Deception}
In cyber security, deception is often used to create a controlled path for attackers to follow starting from the the initial reconnaissance phase. 
For example, Cohen and Koike~\cite{cohen2003leading,cohen2004misleading} showed how deception can control the path of an attack using red teams in experiments attacking a computer network.
Repik~\cite{repik2008defeating} makes a strong argument in favor of planned actions taken to mislead hackers. 
If the deception is obvious to the adversary, even unintentionally, the attacker can avoid, bypass, and even overcome the deceptive traps. We aim to avoid this situation and make the deception hard to identify by an adversary.
\note{RP}{Move to intro}

The recognition that deception could be an integral part of the network defense field, resulted in the need for a tool for deception, and Fred Cohen designed the \emph{Deception ToolKit (DTK)} \cite{DTK}, a tool which makes it appear to attackers as though the system running the DTK has a large number of widely known vulnerabilities. When the attacker issues a command or request, the DTK generates a predefined response, in order to encourage the attacker to continue its exploration of the host, or results in a shutdown of the service.

\subsubsection{Honeypots} 
Honeypots were first used in computer security in the late 1990s. 
Honeypots tempt the attackers to believe that they are valuable assets. 
However, being thoroughly monitored and usually isolated, honeypots give the defender the ability to detect and deflect the attack attempts~\cite{almeshekah2015using}:
\par Detection -- Honeypots result in a low false positive rate, because they are not intended to be used as part of the user's routine tasks in a system; thus any interaction with the honeypots is illegitimate.
\par Prevention -- Honeypots slow down attackers or discourage them from carrying on the attacks. For example, the LaBrea's "sticky" honeypots~\cite{StickyHoneypot} which answer connection attempts in a way that causes the machine on the other end to get "stuck", sometimes for a very long time.
\par Intelligence -- Honeypots are also used for gathering information about attacks. The Honeynet Project \cite{Honeynet} is an international security research organization, which invests its resources in the investigation of the latest attacks and the development of open source security tools to improve Internet security.

Naturally adversaries try to avoid honeypots.
Rowe et al.~\cite{rowe2007defending} suggest the idea of fake honeypots, in which a system might pretend to be a honeypot in order to scare away attackers, reducing the number of attacks and their severity.

Fake vulnerabilities employed in current work for attack graph obfuscation transform a regular operational machine into a kind of high-interaction honeypot. 
However, there are several significant differences between fake vulnerabilties and honeypots. 
First, fake vulnerabilities must not allow exploitation of the vulnerability because they are installed on real operational systems. 
In contrast, exploitation is desirable in case of honeypots. 
Second, \note{RP}{TBD} Our approach does not alter the system while honeypotbased solutions introduce vulnerable machines in order to either capture the attacker or collect information for forensic purposes. 
Instead, we aim at deceiving attackers by manipulating their view of the target system and forcing them to plan attacks based on inaccurate knowledge, so that the attacks will likely fail.

\subsubsection{Virtual Attack Surface}

In \cite{albanese2016deceiving}, the authors propose algorithmic solutions to two classes of problems:
\begin{enumerate}
\item Inducing an external view that is at a minimum distance from the internal view, while minimizing the cost for the defender.
\item Inducing an external view that maximizes the distance from the internal view, given an upper bound on the cost for the defender.
\end{enumerate}
In their solution they build a manipulation graph - a graph that gathered \note{RP}{reflects?} all the possible manipulations to the current system. Then they present two heuristics in order to solve the problems described above:
\begin{enumerate}
\item TopKDistance - which recursively traverses the subgraph to
find a solution.
\item TopKBudget - answer the same problems as TopKDistance but improve time efficiency in the resolution.
\end{enumerate}
Furthermore, they present deception-based techniques for defeating an attacker’s effort to fingerprint operating systems and services
on the target system. They do it by modifying the outgoing traffic.

In contrast to considering only the impact of the manipulations on the defender, we consider both, the impact on the defender and on the attacker. Formulating the problem as an attack graph, enables to model the effect on the attacker's efforts for all possible paths to reach its goal in the network. In addition,  the impact of the fake vulnerabilities on the defender network is modeled as well.  

In our work, the main role of deception is to make the attacker believe that the information obtained is real. 
When this is successfully accomplished, the defender can gain a significant advantage over the adversary.

\subsection{Attack Graph games}
Recently there has been significant interest in applying game theory approaches to security. Durkota et al. introduced the term, "attack graph game" \cite{durkota2015optimal}, and presented a new leader-follower game-theory model of the interaction between a network administrator and an attacker who follows a multistage plan to attack the network. In order to determine the best strategy for the defender, they used the Stackelberg game (a two phase game), in which the defender, (the leader of the game), takes actions in order to strengthen the network by adding honeypots. Then, the attacker selects an optimal attack plan based on knowledge about the defender's strategy. The Stackelberg equilibrium is found by selecting the pure action of the defender that minimizes the expected loss under the assumption that the attacker will respond with an optimal attack.\\
The researchers presented the problem by using a type of logical attack graph that they refer to as dependency attack graphs which are generated by MulVAL. They model the problem as an MDP problem and assign random costs and rewards for solving it.\\
\par  In contrast to our research, there is a choice of how much decoys to put in the network in order to minimize the expected loss of the defender by increasing the probability that the attacker will be caught. Our main goal is to make it more difficult for the attacker, causing attrition and a waste of the adversary's valuable resources until the adversary is detected or waste its resources.\\
The results of this paper were based on an experiment conducted on a small business network (20 hosts) with 14 honeypots. They reported on performance of less than 10 seconds, but did not demonstrate it or analyzed the scalability for a larger network. Our method was tested on a real enterprise network and can scale for large networks as well.\\ 

\subsection{Active Defense}
Cohen and Koike presented a set of experiments where they successfully induced skilled red-team attackers to attack the targeted system in a particular sequence \cite{cohen2003leading}. Their main goal was to mimic physical attack tactics where such techniques can be used to lure the adversaries into specific zones by influencing their decisions, that means taking a specific path desired by the defenders.

Another approach was presented by Trassare \cite{trassare2013technique}.
He presented a technique to deceive attackers by giving them a fake internal network topology of the defender's choice. His strategy is to place the deceptive functionality on the border routers of an AS, because any designated device within the AS or the whole network is protected by the deception. Lastly, he presents a prototype implementation showing positive results.

Two Israeli startups: \emph{Cymmetria} \cite{Cymmetria} and \emph{TRAPX security} \cite{TRAPX} proposed practical ideas for integrating deception in the industry.

\emph{TRAPX} developed the DeceptionGrid Platform, each activity from the “lightest” reconnaissance to advanced breach attempts is contained, recorded and alerted enabling immediate remediation. It deploys automatic an array of decoys (Traps) and breadcrumbs (tokens) that provides visibility into ongoing attacks while luring attackers away from valuable assets.

\emph{Cymmetria} developed the MazeRunner, which creates breadcrumbs and decoys in order to lead attackers to believe that they have successfully gained access to a target machine. Having gained a false sense of security, attackers reveal their attack tools and methods, which defenders are then able to document and analyze. 

The above tools are using dedicated decoys, which are virtual machines (servers or other devices). These decoys are used to identify an interaction with the adversary based on the intuition that legitimate users are not expected to interact with these decoys. These decoys are reached by following breadcrumbs, found on an endpoint. Breadcrumbs are passive elements of data that can be found by the attacker in the reconnaissance phase (e.g. browser cookies, RDP and SSH credentials, shared folder mappings, OpenVPN scripts).

Our approach is different because we are not focusing on luring the attacker into specific points in the network, but focusing on causing the adversary significant loss of resources and  significant increase in the attack execution time.

\subsection{Obfuscation}
In order to perform a successful attack, one key piece of information that attackers need is the identity of the target system's running services and operating system. Operating systems have unique characteristics that can help an attacker to identify the operating system being used in the target system. Examples of these characteristics include the TCP/IP packet, response messages to queries, response messages to errors, predictability of sequence numbers, and banner information.\\
Murphy et al. \cite{murphy2010application} investigated the efficacy of using a host-based operating system (OS) obfuscation as an integral part of Air Force computer defenses. They used the \textit{OSfuscate} tool, by Crenshaw \cite{crenshaw2008osfuscate}, and concluded that it is effective in continuously obfuscating the host OS. OS obfuscation is a hard task! Making enough changes to normal OS functions that trick an attacker while maintaining a stable system that functions, is not a simple task.\\

\par \textbf{If the defenders use deception techniques, attackers can't make worry-free actions in the network. Instead, they are now forced to invest more resources, time, and effort in their attack attempts, and are under  constant fear to perform a wrong move and get caught. It is clear that deception creates a hostile environment for attackers.}\\

\par Unlike many studies in this field, our approach is not focusing on luring the attacker through specific paths. Our approach will focus on wasting the attacker's time and resources.

\subsection{Moving Target Defense}
A cyber moving target technique, is the presentation of a dynamic attack surface, increasing an adversary's work factor necessary to probe, attack, or maintain presence in a cyber target \cite{definedterm}.
Moving target defense (MTD) has been shown to be effective in cyber defense \cite{zhuang2012simulation} and may become a game changer in the cyber security arena.
In the following paragraphs, we will discuss some key examples from the MTD research field that is related to our research:\\

\par \textbf{Proactive Obfuscation :\\} In \cite{roeder2010proactive} the authors create server replicas that are likely to have fewer shared vulnerabilities. They do this by applying semantic-preserving code transformations which result diverse set of executables. In each period, they restart the servers with fresh images, these replicas will react differently to identical attacks. Thus, there are less opportunities for the attacker to compromise too many of the replicas that a service consists of. For example, the success of a buffer overflow attack typically will depend on stack layout details, so replicas using different obfuscated executables based on address reordering or stack padding are likely to crash instead of cave in to the attacker control.\\

\par \textbf{Dynamic Network Address Translation :}\\ In 1999, as part of the DARPA Information Assurance Program, a dynamic approach for active network defense was presented. This approach tried to verify the assumption that "Dynamic modification of defensive structures improves system assurance" \cite{kewley2001darpa}. The goal of this research was to prevent the attacker's ability to scan and map the network, which will make the attack more challenging. What they did is to change dynamically the addresses and port numbers used by the network's computers. The tool they used called Dynamic Network Translation (DYNAT), which camouflage the host identity information in TCP/IP packets. In their experiments they showed that their approach made it almost impossible to map the network while significantly increasing the attacker's effort. Beating DYNAT is difficult because in order to attack directly the target, the attacker needs to know the address hopping mechanism. However, there are number of drawbacks when using DYNAT: It requires that trusted computers computers on both sides of the communication be within the protection of DYNAT processes.\\

\par \textbf{Mutable Network :}\\ A Mutable Network (MUTE) \cite{al2011toward} is a method that composed changing network configurations such as: IP addresses, port numbers and routes between two points on the network. This method creates  kind of virtual overlay above the existing network, which all the traffic is routed in this virtual overlay. That means, that the original IP address and information on the systems never changes. This method uses encrypted channels in order to synchronize the IP address information.\\

\par \textbf{N-Variant Systems :} \\
The N-Variant systems framework \cite{cox2006n} executes a set of automatically diversified variants on the same inputs, and monitors their behavior to detect divergences. This platform based on the fact that an attacker requires to simultaneously compromise all system variants with the same input.
The proof-of-concept was built into the Linux kernel.\\

\par \textbf{Changing computer configurations using genetic algorithms :}\\
Crouse \& Fulp \cite{crouse2011moving} try to find more secure configurations of systems by using genetic algorithms. They use three ideas from genetics:
\begin{itemize}[noitemsep]
\item [$\circ$] Selection - selecting the best configurations based on their security score.
\item [$\circ$] Crossover - taking two configurations and combining elements of each one to create a new configuration.
\item [$\circ$] Mutation - randomly changing parts of a configuration to make it different from configurations on other systems.
\end{itemize}
The goal of this method is to create diverse configurations, temporally and spatially. Temporal diversity refers to the difference of configuration in a single computer over time and Spatial diversity refers to the difference of configuration between computers at any point in time.

The solution presented in the MTD field tends to reconfigure a system in order to modify the network, according to the adversary's perception. On the other hand, the view of the system is usually inferred by attackers based on the results of probing and scanning tools. Starting from this observation, we do not modify the network but rather provide fake responses to probing. 
thus, the produced attack graph will be significantly different from the actual attack graph constructed from the network, without altering the system itself.

\par After a thorough review of studies that are directly related to our approach, we can say that to the best of our knowledge, there is no approach that adds false vulnerabilities and considers that the act of adding these "lies" has consequences to the routine operation of the network. Therefore, we were unable to find a method that carefully selects which fake vulnerabilities to add and indicates where to add them, optimally, in the network's PCs (without adding dedicated decoys), with the aims of making it \textbf{harder} for the attacker, forcing the attacker to use a significant amount of resources, and increasing the attack execution time dramatically.

\section{Conclusion and future work}\label{sec: Conclusions and future work}
\subsection{Discussion}
In this thesis, we gathered several guidelines for fake vulnerabilities assignments, which prevent detection of the applied deception by the adversary and also suggest how to assign fake vulnerabilities in an optimal manner.\\
When applying a deception mechanism it is important to mask the deception. If the deception mechanism is consistent and free from contradictions, it will be difficult to identify by the adversary and therefore its effect will be greater.\\
We present new way to defend networks by assigning fake vulnerabilities in an optimal way.\\
First, we model all the possible paths of an attack in the network using a state of the art model, called logical attack graph. Then, we find the optimal assignment of fake vulnerabilities in the given network, using the attack graph model. As initial results, we proposed a random assignment of fake vulnerabilities. In order to find an optimal assignment, that maximize the adversary's effort, we model this problem as an AI search problem. Where each node represents the fake assignments chosen so far, and the fake assignments left to choose.\\
We have proposed an admissible heuristic and a non-admissible heuristic, in order to examine the impact on the running time. We can see from the results that there was no significant difference in running time between the admissible heuristic and the non-admissible heuristic. Thus, in this case, an optimal assignment of fake vulnerabilities is preferable.\\
In order to reduce the algorithm's run time, besides examine an admissible and non-admissible heuristics, we offered a two more improvements: 
\begin{itemize}
\item Different search algorithms - We examine both A* and DFBnB search algorithms. From the results, we can see that in all tested cases, A* finds the optimal assignment of fake vulnerabilities and confirm it faster than the DFBnB search algorithm.
\item Sorting the list of assignment candidates - We examine two different list sorting - shortest path first and high utility first. Although there was no significant difference in run time between the two, but they were significantly better than a random sorting. Thus, the two suggested list ordering approaches improve the run time of the algorithm.
\end{itemize}

From the results described in the previous section, we can conclude that solving the problem of fake vulnerabilities assignment is applicable for enterprise networks, when using the appropriate parameters for search: \\
We can see that A* search algorithm can find and assert the optimal solution quickly, by using good ordering of the candidates list. 
For example, finding an optimal solution using A* search algorithm with shortest path first (or high utility first) candidate list sorting, can decreased up to 43 times less the run time factor, compared to find an optimal solution using DFBnB search algorithm with random ordering of the candidate list. \\
We found a good admissible heuristic, which does not dramatically increase running time, relative to the non-admissible one, and produce optimal results.\\ 
As described in this thesis, the input to this algorithm, besides the given enterprise network, is the budget. In a real functioning network, attackers may prefer to attack the network during the night, when the amount of communication passing through the network is lower. Furthermore, a defender may want, in order to not overload the network in a day time, to increase the amount of deception in the network. All of the above reinforces the choice of adjusting the amount of budget as input in the system.\\
Not less important, we confirm that our method is applicable in an enterprise network and adding fake vulnerabilities is feasible. We examine our approach on a real enterprise network with 80 hosts on a virtual machine with average capabilities at a reasonable time.

\subsection{Future work}
Our goal is to produce optimal deployment of fake vulnerabilities to an enterprise network.
The study can be continued in several directions:
First, we can and should shorten the running time. We can do it by reducing the size of the attack graph.
We can use \cite{gonda2017scalable}, \cite{noel2004managing}, \cite{homer2008improving} or \cite{zhang2011effective} works in order to reduce the given attack graph, and apply our AI search approach on it.\\
This will help us reduce the run time in two critical phases in our algorithm:
\begin{enumerate}
\item Minimize the number of potential deceptive computers.
\item We can reduce the time of finding the optimal path to the goal, at each fake vulnerabilities assignment, during the search algorithm. From the results we can see that this phase can repeat more than 10,000 times (see Figure \ref{fig:searchAlgorithmExpanded}).
\end{enumerate}
Another direction we can take, is to collect a real attacks data sets and model the attacker by using this knowledge. Our method can also fit after modifying the attacker's modeling.

\section*{References}

\bibliography{RP}

\end{document}